\newcommand{\dlangle}{\langle \! \langle}
\newcommand{\drangle}{\rangle \! \rangle}
\newcommand{\G}{\mathcal G}
\newcommand\altsim[1]{\Rrightarrow_{#1}}
\newcommand{\altbisim}[1]{\Lleftarrow\!\!\!\Rrightarrow_{#1}}
\newcommand{\br}[1]{\overline{#1}}
\def\restr#1{\,\mbox{\rule[-4pt]{0.5pt}{13pt}}_{#1}}
\tikzstyle{every node} =
\tikzstyle{every edge} +=
\tikzstyle{noall} =
\tikzstyle{nodraw} =
\tikzstyle{nofill} =
\tikzstyle{cnode} =
\tikzstyle{snode} =
\tikzstyle{lnode} =
\tikzstyle{initstate}=[circle,draw,trans, minimum size=8mm, fill=yellow]
\tikzstyle{state}=[circle,draw,trans, minimum size=8mm]
\tikzstyle{trans}=[font=\footnotesize]
\newcommand\cata[1]{\textcolor{black}{#1}}
\newcommand\new[1]{\textcolor{black}{#1}}
\def\duplic{{\sc{Duplicator}}}
\def\spoiler{{\sc{Spoiler}}}
\def\pspoil{{\sc{P-Spoil}}}
\def\ispoil{{\sc{I-Spoil}}}
\def\pdupl{{\sc{P-Dupl}}}
\def\idupl{{\sc{I-Dupl}}}
\def\B{{\mathcal{B}}}
\definecolor{myred}{cmyk}{0, 0.7808, 0.4429, 0.1412}
\definecolor{myblu}{rgb}{0.5, 0.7, 0.9}
\newcommand\gtm{\G_{M}}
\newcommand\gs{\G_{S}} 
\title{A Hennessy-Milner Theorem for ATL\\ with Imperfect Information}
\author{Francesco Belardinelli$^{1,2}$ \and Catalin Dima$^{3}$ \and Vadim Malvone$^{2}$ \and Ferucio Tiplea$^{4}$}
\institute{Imperial College London, UK \and Universit\'{e} d'\'{E}vry, France \and Universit\'{e} Paris-Est Cr\'{e}teil, France \and Universitatea Al. I. Cuza, Ia\c si, Romania}
\begin{document}

\maketitle

\begin{abstract}
	We show that a history-based variant of alternating bisimulation with
	imperfect information allows it to be related to a variant of Alternating-time Temporal Logic (ATL) with
	imperfect information by a full Hennessy-Milner theorem.  The variant
	of ATL we consider has a {\em common knowledge} semantics, which requires
	that the uniform strategy available for a coalition to accomplish some
	goal must be common knowledge inside the coalition, while other
	semantic variants of ATL with imperfect information do not accomodate
	a Hennessy-Milner theorem.  We also show that the existence of
	a history-based alternating bisimulation between two finite Concurrent
	Game Structures with imperfect information (iCGS) is undecidable.
\end{abstract}

\section{Introduction} \label{introduction}




Alternating-time Temporal Logic (ATL) \cite{AlurHenzingerKupferman02}
is a powerful logic for specifying strategic abilities of individual
agents and coalitions in multi-agent game structures.  Crucially, ATL
has been extended to games of imperfect information \cite{Jamroga+04b}
with various flavors related to the agents' knowledge of the existence
of strategies for accomplishing the coalition's
goals \cite{Agotnes+15,BullingDixJamroga10a,BullingJamroga14}.  In
this contribution, we focus on the {\em common knowledge} ($ck$)
interpretation of ATL under imperfect information, which was first
put forward in
\cite{Jamroga+04b}, along with its {\em objective} and {\em subjective}
interpretations. However, differently from the latter, to the best of
our knowledge, the $ck$ interpretation has nowhere else been
considered in the literature.  Nonetheless, the $ck$ interpretation
allows us to prove a Hennessy-Milner theorem for ATL under imperfect
information for the memoryful notion of bisimulation we introduce in
this paper. This result is in marked contrast with the situation for
the other interpretations, which do not enjoy the Hennessy-Milner
property \cite{HennessyM80}.

The literature on bisimulations for modal logics is extensive, an
in-depth survey of model equivalences for various temporal logics
appears in \cite{Goltz92equivalences}. The landscape for logics of
strategic abilities, including ATL, is comparatively more sparse.
%
A proof of the Hennessy-Milner property for ATL$^*$ with perfect
information was already given in the paper introducing alternating
bisimulations \cite{AlurHKV98}.  Since then, there have been numerous
attempts to extend bisimulations to more expressive languages
(including Strategy Logic recently \cite{BelardinelliDM18}), as well
as to contexts of imperfect information
\cite{Agotnes07irrevocable,Dastani10programs-aamas,BelardinelliCDJJ17}.
In \cite{Dastani10programs-aamas,Melissen13phd} non-local model
equivalences for ATL with imperfect information have been put
forward. However, these works do not deal with the imperfect
information/perfect recall setting here considered, nor do they
provide a local account of bisimulations.  Further,
in \cite{BelardinelliCDJJ17} the authors consider a {\em memoryless}
notion of bisimulation for ATL, under imperfect
information. Unfortunately, their definition does not allow for the
Hennessy-Milner property.  
We also note the results from \cite{DimaMP18} which show that ATL
with imperfect information is incompatible in expressive power when compared with 
the modal-epistemic $\mu$-calculus, contrary to what is known for the perfect information case.
The present contribution extends the notion
of alternating bisimulation to the setting of imperfect information
and perfect recall so that it satisfies the Hennessy-Milner property:
two game structures are bisimilar iff they satisfy the same formulas
in ATL.

The classic proof for Hennessy-Milner type properties
typically uses bisimulation games played between \spoiler{}
and \duplic{}. These bisimulation games
are turn-based, perfect information, safety games (in regards
of \duplic's objective) played on a tree whose nodes are labeled with
pairs of states (or histories, in case of a memoryful semantics) of
the two game structures checked for bisimulation.  Hence,
such games are determined, 
and determinacy plays a
crucial role since, when there is no bisimulation between the two
structures, the bisimulation game cannot be won by \duplic{}, and
hence \spoiler{} has a winning strategy, which is then used for
exhibiting a formula that is satisfied in one structure but not in the
other.

%


The extension of this proof technique to ATL with imperfect
information has to cope with the fact that any notion of bisimulation has
to account for the fact that
coalitions have to choose action profiles in indistinguishable states
in a ``uniform'' way: agents that do not distinguish between two
states must choose the same actions in both.  Uniformity entails a
slightly more involved notion of bisimulation which utilizes {\em
	strategy simulators} \cite{BelardinelliCDJJ17}.
Then, any bisimulation game has to encode these strategy simulators,
in the sense that \duplic{} is given the choice of a uniform strategy
in some common-knowledge neighbourhood in one of the game structures
and the \spoiler{} has to reply with a uniform strategy in the
related common-knowledge neighbourhood of the other game structure.

The problem raised by this generalization is that positions in a
bisimulation game are normally labeled with histories, not
common-knowledge neighbourhoods, as bisimulations relate the former, not the latter.
So, we need both a \spoiler{} and a \duplic{} who have imperfect
information at each position of the bisimulation game.  On the other
hand, as it is the case with bisimulations for the perfect information
case, for each choice of strategies in the two structures, the outcomes of
one strategy have to be related with the outcomes of the other strategy.
But this requires both \spoiler{} and \duplic{} to
be \emph{perfectly-informed}!

The solution we propose is a 4-player bisimulation game played between  
the  \spoiler{} coalition $ \{$\ispoil, \pspoil $\}$ 
and the \duplic{} coalition $\{$\idupl, \pdupl $\}$, where both {\sc
	I}-players have imperfect information, while both {\sc P}-players have
perfect information.  We show that such a game can be won by
the \duplic{} coalition if and only if there exists a bisimulation
between the two game structures.

Further, we provide a Gale-Stewart type determinacy
theorem \cite{GaleS53} for the bisimulation game, showing that exactly
one of the two coalitions has a winning joint strategy.
The key point is that, when \duplic{} does not have a winning
strategy, the strategic choices for \ispoil{} can be defined in a
uniform way that is only based on \ispoil's observations.  To the best of our
knowledge, this is the first example of a class of multi-player,
imperfect information, zero-sum (reachability) games played over
infinite trees that are determined.
Note that, for technical reasons, our Hennessy-Milner theorem only holds for ATL with the ``yesterday'' modality~$\mathbf{Y}$. 

Moreover, we analyse the problem of checking the existence of a
bisimulation between two given game structures.  We show that this
problem is undecidable in general by building on the undecidability of
the model-checking problem for ATL with imperfect information and
perfect recall.  More specifically, given a Turing machine $M$, we
build a game structure in which a two-agent coalition has a strategy
for avoiding an error state if and only if $M$ halts when starting
with an empty tape. We then build a second, unrelated, simple game
structure in which the same coalition always has an avoiding
strategy. Finally, we prove that the two structures are bisimilar if
and only if $M$ halts.

{\bf Scheme of the paper.}  In Sec.~\ref{preliminaries} we recall the
syntax and semantics of ATL according to various flavors of imperfect
information (and perfect recall). Sec.~\ref{altbisim} extends the
bisimulation in \cite{BelardinelliCDJJ17} to the case of perfect
recall, and shows that bisimilar game structures satisfy the same
formulas in ATL.  Then, in Sec.~\ref{bisimgames} we introduce our
variant of bisimulation games, for which we prove that the \duplic{}
coalition has a winning strategy if and only if there exists a
bisimulation between the two given game structures.  In Sec.~\ref{HMproperty} we
prove the Gale-Stewart determinacy theorem for our bisimulation games,
which allows us to prove the
Hennessy-Milner theorem.  Finally, in Sec.~\ref{bisim_undec} we show
that
checking the existence of a bisimulation between two
given game structures is undecidable in general.

\section{ATL with Imperfect Information} \label{preliminaries}


In this section we present the syntax and semantics of the
Alternating-time Temporal Logic ATL$^*$
\cite{AlurHenzingerKupferman02}.  In the rest of the paper we assume a
set $AP$ of atomic propositions (or atoms) and a set $Ag$ of agents.
%
%
\begin{definition}[ATL$^*$]
	History formulas $\phi$ and path formulas $\psi$ in ATL$^*$ are defined
	by the following BNF, where $p \in AP$ and $A \subseteq Ag$:
	\begin{eqnarray*}
		\phi  & ::= &
		p \mid \neg \phi \mid \phi \to \phi \mid
		\dlangle A \drangle \psi\\
		\psi  & ::= &
		\phi \mid \neg \psi \mid \psi \to \psi \mid
		X \psi \mid \mathbf{Y} \psi \mid \psi U \psi
	\end{eqnarray*}
	
	The formulas in ATL$^*$ are all and only the history formulas.
\end{definition}

The ATL$^*$ operator $\dlangle A \drangle$ intuitively means that `the
agents in coalition $A$ have a (collective) strategy to achieve \ldots',
where the goals are LTL formulas built by using operators `next' $X$ and 
`until' $U$.
We
define $A$-formulas as the formulas in ATL$^*$ in which $A$ is the only
coalition appearing in ATL$^*$ modalities.

Notice that we talk about {\em history} formulas, rather than {\em
	state} formulas as customary, as such formulas will be interpreted
on histories rather than states as per perfect recall.

We provide ATL$^*$ with both the objective and subjective variants
\cite{Jamroga+04b} of the history-based semantics with imperfect
information and perfect recall, as well as a novel interpretation
based on {\em common knowledge} \cite{Fagin+95b}.



\begin{definition}[iCGS] \label{iCGS}
	Given sets $AP$ of atoms and $Ag$ of agents, a {\em concurrent game
		structure with imperfect information}, or iCGS, is a tuple $\G =
	\langle Ag, S, s_0, Act, \allowbreak \{\sim_i \}_{i \in Ag},
	d,\to, \pi \rangle$ where
	\begin{itemize}
		\item  $S$ is a non-empty set of \emph{states} and $s_0 \in S$ is the \emph{initial state} of $\G$.
		\item $Act$ is a finite non-empty set of {\em actions}. A tuple $\vec a  = ( a_i)_{i \in Ag} \in Act^{Ag}$
		is called a \emph{joint action}.
		
		\item For every agent $i \in Ag$, $\sim_i$ is an equivalence relation on $S$,
		called the \emph{indistinguishability relation} for agent $i$.
		
		\item $d: Ag \times S \to (2^{Act} \setminus \{ \emptyset \})$ is the \emph{protocol function},
		satisfying the property that, for all states $s, s'\in S$ and any agent $i$, $s \sim_i s'$ implies $d(i, s) = d(i,
		s')$. That is, the same (non-empty) set of  actions is available to agent $i$ in indistinguishable states.
		\item $\to \subseteq S \times Act^{Ag} \times S$ is the
		{\em transition relation} such that, for every state $s \in S$ and
		joint action $\vec{a} \in Act^{Ag}$, $(s,\vec{a},s') \in \to$
		for some state $s' \in S$ iff $a_i \in d(i, s)$ for every agent $i \in
		Ag$.  We normally write $s \xrightarrow{\vec{a}} r$ for
		$(s,\vec{a},r) \in \to$.
		\item
		$\pi: S \to 2^{AP}$ is the {\em state-labeling function}.
	\end{itemize}
\end{definition}

\paragraph{Runs.}
Given an iCGS $\G$, a {\em run} is a finite or infinite
sequence $\rho = s_0 \vec{a}_0 s_1 \ldots$ in $((S \cdot
Act^{Ag})^*\cdot S) \cup (S \cdot Act^{Ag})^\omega$ such that for
every $j\geq 0$, $s_j \xrightarrow{\vec{a}_j} s_{j+1}$. 
Given a run
$\rho = s_0 \vec{a}_0 s_1 \ldots$ and $j \geq 0$, $\rho[j]$
denotes the $j+1$-th state $s_j$ in the sequence and $\rho[j,k]$ denotes the sequence of states from the $j+1$-th state to the $k+1$-th state; while $\rho_{\geq
	j}$ (or $\rho[\geq j]$) denotes run $s_j \vec{a}_j s_{j+1} \ldots$ starting from
$\rho[j]$, and $\rho_{\leq j}$ (or $\rho[\leq j]$) denotes run $s_0 \vec{a}_0 s_{1} \ldots \vec{a}_{j-1} s_{j} $. 
Further, with $act_i (h,m)$ we denote the $m$-th action of agent $i$ in history $h$.

We call finite runs {\em histories}, denote them as $h \in H$, their
length as $|h| \in \mathbb{N}$, and their last element $h_{|h|-1}$ as
$last(h)$; whereas infinite runs are called {\em paths} and denoted as
$\lambda, \lambda' \in P$.  We denote the set of all histories
(resp.~paths) in an iCGS $\G$ as $Hist(\G)$ (resp.~$Path(\G)$).
Notice that states are instances of histories of length
1. Accordingly, several notions defined below for histories can also
by applied to states.
%
Finally, we write $h \preceq \rho$ to say that $h$ is the prefix of $\rho$, that is $h = \rho[\leq |h|]$.

For a coalition $A \subseteq Ag$ of agents, a {\em joint $A$-action}
denotes a tuple $\vec a_A = (a_i)_{i \in A} \in Act^A$ of actions, one
for each agent in $A$.  For coalitions $A\subseteq B \subseteq Ag$ of
agents, a joint $A$-action $\vec a_A$ \emph{is extended} by a joint
$B$-action $\vec b_B$, denoted $\vec a_A \sqsubseteq \vec b_B$, if for
every $i \in A$, $a_i = b_i$.  Also, a joint $A$-action $\vec
a_A$ is \emph{enabled} at state $s\in S$ if for every agent $i \in A$,
$a_i \in d(i,s)$.

\paragraph{Epistemic neighbourhoods.}  We extend the indistinguishability
relations $\sim_i$, for $i\in Ag$, to histories in a synchronous,
point-wise manner: $h \sim_i h'$ iff $|h| = |h'|$ and for all $m \leq
|h|$, $h_m \sim_i h'_m$ and $act_i(h,m) = act_i(h',m)$.

Given a coalition $A \subseteq Ag$ of agents, the \emph{collective
	knowledge relation} $\sim^E_A$ is defined as $\bigcup_{i \in A}
\sim_i$, while the \emph{common knowledge relation} $\sim^C_A$ is the
transitive closure $(\bigcup_{i \in A} \sim_i)^+$ of $\sim^E_A$.
Then, $C^{\G}_A(h) = \{ h' \in H \mid h' \sim_A^C h \}$ is the {\em
	common knowledge neighbourhood} (CKN) of history $h$ for coalition
$A$ in the iCGS $\G$.
We will omit the
superscript $\G$ whenever it is clear from the context.

\paragraph{Uniform strategies.}
We introduce a notion of strategy for the interpretation of $\dlangle
A \drangle$ modalities.
\begin{definition}[Strategy] \label{unstrategy}
	A \emph{(uniform, memoryfull)} {\em strategy} for an agent $i \in Ag$ is a
	function $\sigma : H \to Act$ that is compatible with $d$ and
	$\sim_i$, that is, for all histories $h, h' \in H$,
	$\sigma(h) \in d(i, last(h))$
	and 
	$h \sim_i h'$ implies $\sigma(h) = \sigma(h')$.
	
\end{definition}

We denote by $\Sigma_R$
the set of all memoryfull
uniform strategies.


A strategy for a coalition $A$ of agents is a set $\sigma_A
= \{ \sigma_a \mid a \in A\}$ of strategies, one for each agent in
$A$.
Given coalitions $A \subseteq B \subseteq Ag$, a strategy $\sigma_A$
for coalition $A$, a state $s \in S$, and a joint $B$-action $\vec b_B \in
Act^B$ that is enabled at $s$, we say that $\vec b_B$
is \emph{compatible with} $\sigma_A$ ({\em in} $s$)
whenever $\sigma_A(s) \sqsubseteq \vec
b_B$.
For states $s, s' \in S$ and strategy $\sigma_A$, we write
$s \xrightarrow{\sigma_A(s)} r$ if $s \xrightarrow{\vec a} r$ for
some joint action $\vec a \in Act^{Ag}$ that is compatible with
$\sigma_A$.

We define three notions of \emph{outcome} of strategy $\sigma_A$ at
history $h$, corresponding to the \emph{objective}, \emph{subjective},
and {\em common knowledge} interpretation of alternating-time
operators.  Fix a history $h$ and a strategy $\sigma_A$ for coalition
$A$.
\begin{enumerate}
	\item The set
	of \emph{objective outcomes of $\sigma_A$ at $h$}
	is defined as
	$out_{obj}\allowbreak (h, \sigma_A) = \big \{
	\lambda \in P \mid \lambda_{\leq |h|} = h \text{ and for all } j \geq |h|,\allowbreak
	\lambda[j] \xrightarrow{\sigma_A(\lambda_{\leq j})} \lambda[j + 1]\big\}
	$.
	
	\item The set
	of \emph{subjective outcomes of $\sigma_A$ at $h$} is defined as
	$
	out_{subj} \allowbreak (h, \sigma_A) =
	{\displaystyle \bigcup}_{i \in A, h' \sim_i h} out_{obj}(h', \sigma_A)$.
	
	\item The set
	of \emph{common knowledge (ck) outcomes of $\sigma_A$ at $h$} is defined as
	$out_{ck} (h,\allowbreak \sigma_A) =
	{\displaystyle \bigcup}_{h' \in C_A(h)} out_{obj} (h', \sigma_A)$.
\end{enumerate}

Intuitively, objective outcomes are paths beginning with the current
history $h$ and consistent with the current joint strategy $\sigma_A$;
whereas subjective (resp.~common knowledge) outcomes are paths
beginning with some history $h'$ {\em indistinguishable} from $h$
according to collective (resp.~common) knowledge (as well as
consistent with $\sigma_A$).
Again, notions of outcomes from states can be obtained from the
definitions above, as states are a particular type of histories.

\begin{definition} \label{semantics}
	Given an iCGS $\G$, a history formula $\phi$, path formula $\psi$, and
	$m \in \mathbb{N}$, the \emph{subjective} (resp.~\emph{objective}, {\em
		common knowledge}) satisfaction of $\phi$ at history $h$ and of
	$\psi$ in path $\lambda$, denoted $(\G, h) \models_{x} \phi$ and
	$(\G, \lambda,m) \models_{x} \!\psi$ for $x \in \{ subj,  obj, ck\}$,
	is defined recursively as
	follows (clauses for Boolean operators are immediate and thus omitted):
	{\small
		\begin{align*}
			& (\G, h)  \models_{x}  p  \! \! \!  \! \! \! &  \text{iff } &  p \in \pi(last(h)) \\
			&(\G, h) \models_{x} \dlangle A \drangle \psi & \text{iff } & \text{ for some } \sigma_{A} \in \Sigma_R,\\ & &  & \text{ for all } \lambda \in out_{x}(h, \sigma_A),
			(\G, \lambda, |h|) \models_{x} \psi\\
			&(\G, \lambda, m) \models_{x} \phi & \text{iff } & (\G, \lambda_{\leq m}) \models_{x} \phi\\
			&(\G, \lambda, m) \models_{x}  X \psi & \text{iff } &  (\G, \lambda, m+1) \models_{x}   \psi\\
			&(\G, \lambda, m) \models_{x}  \mathbf{Y} \psi & \text{iff } &  m \geq 1 \text{ and } (\G, \lambda, m-1) \models_{x}   \psi\\
			&(\G, \lambda, m) \models_{x}  \psi U \psi' & \text{iff } & \text{ for some } j \geq m, (\G, \lambda, j) \models_{x} \psi', \text{ and}\\
			& & & \text{for all } k,  m \leq k <j \text{ implies } (\G, \lambda, k) \models_{x} \psi
		\end{align*}
	}
\end{definition}

\begin{remark}
	The individual and common knowledge operators $K_i$ and $C_A$ of
	epistemic logic \cite{Fagin+95b} can be added to the syntax of ATL$^*$
	with the following (memoryful) interpretation:
	\begin{eqnarray*}
		(\G, h) \models_{x} K_i \phi & \text{iff} & \text{for all } h' \sim_i h,
		(\G, h') \models_{x} \phi\\ (\G, h)
		\models_{x} C_A \phi & \text{iff} & \text{for all } h'\in C_A(h),
		(\G, h') \models_{x} \phi
	\end{eqnarray*}
	
	Withnin the subjective or the  common knowledge
	interpretation of ATL$^*$, the individual knowledge operator becomes a derived operator, as we
	have $(\G,h) \models_{x} K_i \phi $ iff $(\G,h) \models_{x}
	\dlangle i \drangle \phi U \phi$ for both $x \in \{subj,ck\}$.  
	It is known that there exists no such definition for the knowledge
	operators in ATL$^*$ within the objective interpretation.
	Furthermore, and only for the case of the common knowledge interpretation, we may similarly derive the common knowledge operator as well: $(\G,h)
	\models_{ck} C_A \phi$ iff $(\G,h) \models_{ck} \dlangle A \drangle
	\phi U \phi$.
\end{remark}

%


\begin{example} \label{excoordination}
	
	We describe a coordination scenario comprising of two agents,
	$1$ and $2$, who have to agree on a meeting. But $1$ does not
	know where she is, in Paris or London, and therefore which is the time
	zone, while $2$ does not know if it is winter time or summer
	time. Agent $1$ can choose either go to the meeting ($g$) or wait one
	hour ($w$) whereas $2$ can choose either to go at 3pm ($3$) or
	at 4pm ($4$), local time.  Now suppose it is 3pm GMT.  In London, in
	the winter ($s_2$) 1 and 2 coordinate if 1 goes to the meeting
	and 2 goes at 3pm local time. They also meet if 1 waits one hour and 2
	goes at 4pm. All other combined actions are unsuccessful.  Analogously
	for Paris in the winter ($s_1$), and London in the summer ($s_3$).
	The iCGS $\G$ depicted in Fig.~\ref{fig:exmrev} shows the described
	scenario.  Since $1$ and $2$ have partial observability, $1$ (resp. $2$) cannot distinguish between states $s_2$ and $s_1$
	(resp.~$s_3$).  After the initial choice, $1$ and $2$ stay
	indefinitely in either $s_4$ or $s_5$.
	Finally, we use two atoms, to denote success ($s$) and failure ($f$), respectively.
	%
	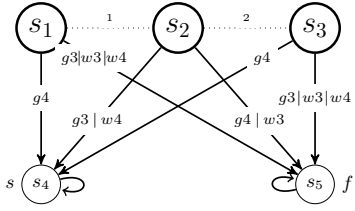
\begin{figure}
		\begin{center}
			\mbox{\scalebox{0.8}[0.8]{
					\begin{tikzpicture}
					[node distance = 6em]
					\node []
					(S0)
					{\Large $\stackrel{}{}$};
					\node []
					(S5)
					[below of = S0, node distance = 2em]
					{\Large $\stackrel{}{}$};
					\node [cnode]
					(S4)
					[left of = S5, node distance = 7em,label=left:{$s$}]
					{ $s_4$};
					\node [cnode]
					(S6)
					[right of = S5, node distance = 7em,label=right:{$f$}]
					{ $s_5$};
					\node [cnode, very thick]
					(S2)
					[above of = S0, node distance = 6em]
					{\Large $s_2$};
					\node [cnode, very thick]
					(S3)
					[right of = S2, node distance = 7em]
					{\Large $s_3$};
					\node [cnode, very thick]
					(S1)
					[left of = S2, node distance = 7em]
					{\Large $s_1$};
					
					\path[-stealth']
					(S2)
					edge  [pos = 0.6]
					node [] {\scriptsize $g4 \mid w3$}
					(S6)
					(S1)
					edge  [pos = 0.4]
					node [] {\scriptsize $g4$}
					(S4)
					edge  [pos = 0.13]
					node [] {\scriptsize $g3 \!\!\mid\!\! w3 \!\mid\!\! w4$}
					(S6)	
					(S2)	
					edge  [pos = 0.6]
					node [] {\scriptsize $g3 \mid w4$}
					(S4)
					(S3)	
					edge  [pos = 0.13]
					node [] {\scriptsize $g4$}
					(S4)
					edge  [pos = 0.4]
					node [] {\scriptsize $g3 \!\mid\! w3 \!\mid\! w4$}
					(S6)	
					(S4)	edge	[loop right]
					node [noall] {}
					()
					(S6)	edge	[loop left]
					node [noall] {}
					()
					;
					
					\draw[dotted] (S1) -- node[above] {\tiny{$1$}} (S2);
					
					\draw[dotted] (S2) -- node[above] {\tiny{$2$}} (S3);
					\end{tikzpicture}
				}
			}
			\caption{\label{fig:exmrev} The meeting scenario in Example~\ref{excoordination}.\vspace*{-10pt}}
		\end{center}
	\end{figure}
	
	As an example of specification in ATL$^*$, consider formula
	$\varphi = \dlangle \{ 1,2 \} \drangle X s$.
	This formula can be read as: $1$ and $2$ have a joint strategy
	to meet.
	Note that $\varphi$ is true in both $s_1$ and $s_3$ when considering
	the subjective interpretation.  However, is the truth of $\varphi$ in
	$s_1$ and $s_3$ justified from point of view of the rational behaviour
	of $1$ and $2$?
	Specifically, since $\varphi$ 
	is true in $s_1$ according to the subjective interpretation, both 1 and 2 know that they have a successful strategy, which consists in playing action $g$ for $1$ and action $4$ for $2$. But for this strategy to be successful (i.e., satisfying $X s$ for all outcomes) it assumes that $2$ is playing action $3$ in $s_2$: is such an assumption
	rationally justified?  Notice that in $s_2$,  $2$ considers state
	$s_3$ epistemically possible, and in $s_3$ the joint action $(g,3)$
	leads to failure. 
	Hence, it does not appear to be rational for $2$ to play
	$3$ in $s_2$.  Ever more so that, by playing $4$ in $s_2$ and $s_3$, $2$ can coordinate with 1 to achieve success.
	This example shows a scenario where, even though both agents know that
	their strategies are successful in principle, they do not necessarily
	coordinate, as they do not know that the other agent knows her
	strategy to be successful.  Indeed, we have that in both $s_1$ and
	$s_2$ formula $\varphi$ is false according to the common knowledge
	interpretation. So, it is not the case that they have common knowledge
	of their strategies being successful.

\end{example}



\section{Alternating Bisimulations} \label{altbisim}

In this section we introduce a notion of bisimulation suitable for
concurrent game structures with imperfect information.  In particular,
we show that it preserves the satisfaction of formulas in ATL$^*$,
when interpreted under imperfect information and perfect recall.
Firstly, we present several auxiliary notions.
Hereafter $\G = \langle Ag, S, s_0, Act, \{\sim_i \}_{i \in Ag},
d, \to, \allowbreak \pi \rangle$ and $\G' = \langle Ag, S', s'_0,
Act', \{\sim'_i \}_{i \in Ag}, d', \allowbreak \to', \pi'
\rangle$ are two iCGS defined on the same set $Ag$ of agents, with histories $h \in Hist(\G)$ and $h' \in Hist(\G')$.



\paragraph{Partial strategies.}
A \emph{partial (uniform, memoryful) strategy} for agent $i \in Ag$ is
a partial function $\sigma : H \rightharpoonup Act$ such that for each
$h, h' \in H$, $\sigma(h) \in d(i,last(h))$, and $h \sim_i h'$ implies
$\sigma(h) = \sigma(h')$.  We denote the domain of partial strategy
$\sigma$ as $dom(\sigma)$.  Given a coalition $A \subseteq Ag$,
a \emph{partial strategy for $A$} is a tuple $(\sigma_i)_{i \in A}$ of
partial strategies, one for each agent $i \in A$.  The set of partial
strategies for $A$ is denoted as $PStr_A$.
Given a set $Q \subseteq H$ of histories and coalition $A \subseteq Ag$,
we denote by $PStr_A(Q)$ the set of partial strategies whose domain is
$Q$:
\[
PStr_A(Q) = \big\{(\sigma_i)_{i \in A} \in PStr_A \mid dom(\sigma_i) =
Q \text{ for all } i \in A\big\}
\]

Additionally, given a (total or partial) strategy $\sigma_A$ and a
history $h \in dom(\sigma_A)$, define the set of \emph{successors of $h$
	by $\sigma$} as
\cata{
	\[
	succ(h, \sigma_A) = 
	\{ h \alpha s \mid \alpha \!\in\! Act^{Ag} \text{ with } \sigma_A(h) \sqsubseteq \alpha \text{ and } 
	h \xrightarrow{\alpha} s \}
	\]
}
%
Further, we set 
$succ(\sigma_A) =
\bigcup_{h \in dom(\sigma_A)}\allowbreak succ(h,\sigma_A)$.
%
%
%
\begin{definition}[Strategy simulators]
	Given a coalition $A \subseteq Ag$, an \emph{$A$-strategy simulator}
	(or simply \emph{strategy simulator}, when $A$ is understood from the
	context) is a family $ST
	= \allowbreak \big(ST_{C_A(h),C_A(h')}\big)_{h \in Hist(\G), h' \in
		Hist(\G')}$ of  mappings $ST_{C_A(h),C_A(h')}:
	PStr_A(C_A(h)) \rightarrow PStr_A(C_A(h'))$
	such that for all histories $h,k \in Hist(\G)$ and $h',k' \in
	Hist(\G')$,
	\begin{multline}
		\text{if } C_A(h) = C_A(k) \text{ and } C_A'(h') = C_A'(k') \\
		\text{ then } ST_{C_A(h),C_A'(h')} = ST_{C_A(k),C_A'(k')}
		\label{prop:ST-independence}
	\end{multline}
\end{definition}
Hereafter, we simplify the notation by writing $ST(\sigma)$ instead of
the cumbersome $ST_{C_A(h),C_A(h')}(\sigma)$, whenever $h$ and $h'$ are
clear from the context and $\sigma \in PStr(C_A(h))$.


We can now introduce the notion of (bi)simulation for iCGS.
\begin{definition}[Memoryful Simulation] \label{def:bisim}
	Let $A\subseteq Ag$ be a coalition of agents.  A relation
	$\altsim{A} \subseteq Hist(\G) \times Hist(\G')$ is a \emph{simulation for $A$} iff
	there exists a strategy simulator $ST$ such that for any two histories $h \in Hist(\G)$, $h' \in Hist(\G')$,
	$h \altsim{A} h'$ implies the following:\label{it:original}
	\begin{enumerate}
		\item\label{it:agree} $\pi(last(h)) = \pi'(last(h'))$;
		
		\item For every $i \in A$ and $k' \in Hist(\G')$, if $h' \sim'_i k'$ then for some
		$k \in Hist(\G)$, $h \sim_i k$ and $k \altsim{A} k'$.

		\item 
		For every pair of histories $k \in C_A(h)$ and $k' \in C_A'(h')$ with $k \!\altsim{A}\! k'\!$,
		for every partial strategy $\sigma_A \!\!\in\! PStr_A(C_A(h)\!)$ 
		and every history $l' \in succ(k', ST(\sigma_A))$,
		there exist a history $l \in succ(k, \sigma_A)$ such that $l \altsim{A} l'$.
	\end{enumerate}
	
	A relation $\altbisim{A}$ is a {\em bisimulation} iff both
	$\altbisim{A}$ and its converse $\altbisim{A}^{-1} = \{ (h',h) \mid
	h \altbisim{A} h' \}$ are simulations.
\end{definition}


We also extend (bi)-simulation to paths $\lambda \in Path(\G)$, $\lambda' \in Path(\G')$, by denoting 
$\lambda \altsim{A} \lambda'$ 
iff for all $j \geq 0$, $\lambda_{\leq j} \altsim{A} \lambda'_{\leq j}$.

The main result of this section, Theorem~\ref{thm-bisim-mem}, shows
that bisimilar iCGS satisfy the same formulas in ATL$^*$ under
imperfect information and perfect recall. To prove this result, we
need the following auxiliary lemma:
\begin{lemma}\label{lemma:infinite-strats2}
	If $h \altsim{A} h'$ then
	for every strategy $\sigma_A$, there exists a
	strategy $\sigma'_A$ such that
	\begin{itemize}
		\item[($*$)] for every path $\lambda' \in  out_x(h',\sigma_A')$, for $x \in \{subj,obj, ck\}$,
		there exists a path $\lambda \in out_x(h,\sigma_A) $ such
		that $\lambda \altsim{A} \lambda'$.
	\end{itemize}
\end{lemma}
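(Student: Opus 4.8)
The plan is to transport the given strategy $\sigma_A$ from $\G$ to $\G'$ through the strategy simulator $ST$ witnessing $\altsim{A}$, and then to read the matching outcome $\lambda$ off $\lambda'$ one step at a time via the action-matching clause of Definition~\ref{def:bisim}. The first observation I would record is that any two histories related by some $\sim_i$ with $i\in A$, hence by $\sim^C_A$, have equal length, so every common-knowledge neighbourhood $C_A(\cdot)$ (and $C_A'(\cdot)$) consists of histories of a single length; I may therefore build $\sigma'_A$ neighbourhood-by-neighbourhood, or equivalently level-by-level. On each neighbourhood $C'$ of $\G'$ that I associate with a neighbourhood $C$ of $\G$ I set $\sigma'_A\!\restr{C'}=ST(\sigma_A\!\restr{C})$; since the image of $ST$ lies in $PStr_A(C')$ it is uniform on $C'$, and as $\sim_i$ never links histories across distinct neighbourhoods this assembles into a globally uniform (total) strategy, arbitrary uniform values being assigned on neighbourhoods that are never associated.

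I would construct the association between neighbourhoods, together with witnesses, by induction on the level. At the base level $|h'|$ I must simultaneously cover the starting histories of all three outcome notions: $h'$ itself for $obj$, the histories $\sim'_i$-linked to $h'$ for $subj$, and all of $C_A'(h')$ for $ck$. In each case a (possibly iterated) application of the epistemic back clause of Definition~\ref{def:bisim} to $h\altsim{A} h'$ yields, for every such $k'$, a witness $k$ with $k\altsim{A} k'$ lying in the matching starting set of $\G$ (respectively $\{h\}$, a $\sim_i$-class, or $C_A(h)$); I associate $C_A'(h')$ with $C_A(h)$ and put $\sigma'_A\!\restr{C_A'(h')}=ST(\sigma_A\!\restr{C_A(h)})$. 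For the inductive step, given a reachable $k'_j$ with witness $k_j$ (so $k_j\altsim{A} k'_j$ and $\sigma'_A(k'_j)=ST(\sigma_A\!\restr{C_A(k_j)})(k'_j)$) and any $l'\in succ(k'_j,\sigma'_A)$, the action-matching clause applied to $k_j\altsim{A} k'_j$ delivers $l\in succ(k_j,\sigma_A)$ with $l\altsim{A} l'$. To keep $\sigma'_A$ coherent on the freshly reached neighbourhood $C'=C_A'(l')$ I fix one such $l'$ as an anchor, set $C=C_A(l)$, re-select via the back clause a witness inside $C$ for every other history of $C'$, and define $\sigma'_A\!\restr{C'}=ST(\sigma_A\!\restr{C})$; property~(\ref{prop:ST-independence}) ensures this value depends only on the pair $(C,C')$.

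Granting this, verifying $(*)$ is then direct. Given $\lambda'\in out_x(h',\sigma'_A)$, its origin $k'=\lambda'_{\le|h'|}$ carries a base-level witness $k$ in the appropriate starting set of $\G$, and collecting the successors $l$ produced by the action-matching clause along $\lambda'$ yields a path $\lambda$ that is consistent with $\sigma_A$ from index $|h|$ on and hence belongs to $out_x(h,\sigma_A)$, with $\lambda_{\le j}\altsim{A}\lambda'_{\le j}$ for all $j\ge|h|$; relatedness on the shared past (indices below $|h|$) is inherited from prefix-closure of $\altsim{A}$, which matters precisely because the $\mathbf{Y}$ modality may revisit those indices. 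The labelling clause gives $\pi(last(\lambda_{\le j}))=\pi'(last(\lambda'_{\le j}))$ throughout, so $\lambda\altsim{A}\lambda'$, as required.

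I expect the coherence of $\sigma'_A$ on a whole neighbourhood of $\G'$ to be the main obstacle. The action-matching clause returns only \emph{some} related successor in $\G$, and for two histories of the same neighbourhood $C'$ these successors could a priori lie in different neighbourhoods of $\G$, leaving $\sigma'_A\!\restr{C'}$ ambiguous. Reconciling this action-level condition with the knowledge-level back clause---by re-choosing all witnesses of $C'$ inside one neighbourhood $C$ of $\G$ along $\sim_i$-chains before applying $ST$, and invoking property~(\ref{prop:ST-independence})---is the crux; uniformity, totality, the three outcome variants, and the prefix bookkeeping are then routine.
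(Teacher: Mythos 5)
Your overall architecture is the paper's: transport $\sigma_A$ into $\G'$ through $ST$, level by level, while maintaining a system of witnesses related by $\altsim{A}$; pad with a default action to get a total uniform strategy; then verify $(*)$ by lifting $\lambda'$ edge by edge via item 3 of Def.~\ref{def:bisim}. You also correctly isolate the crux (coherence of $\sigma'_A$ on a whole neighbourhood of $\G'$). But your resolution of that crux breaks the verification, precisely at the step you call ``direct''. You define $\sigma'_A$ \emph{per neighbourhood}: anchor each fresh $C'=C_A'(l')$ at one history, put $C=C_A(l)$ for the anchor's action-matching witness $l$, re-select witnesses for all other members of $C'$ inside $C$ by the back clause, and set $\sigma'_A\restr{C'}=ST_{C,C'}(\sigma_A\restr{C})$. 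Now follow an outcome path $\lambda'$. At a history $k'_1\in C'$ that is not the anchor (or is the anchor but is reached along $\lambda'$ from a different predecessor than the one used at construction time), the witness you carry is the action-matching one, $l_1\in succ(\lambda_{\le j},\sigma_A)$ with $l_1\altsim{A} k'_1$. Item 3 applied to the pair $(l_1,k'_1)$ only constrains successors of $k'_1$ under $ST_{C_A(l_1),C'}(\sigma_A\restr{C_A(l_1)})$, whereas the actual successors of $k'_1$ under $\sigma'_A$ are those under $ST_{C,C'}(\sigma_A\restr{C})$. Nothing forces $C_A(l_1)=C$: successors, under a uniform strategy, of histories lying in one CKN of $\G$ need not lie in one CKN of $\G$ (opponents' actions can differ and split the neighbourhood), so the two partial strategies may prescribe different actions at $k'_1$ and the lifting stops. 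Conversely, the re-selected witness $\tilde l\in C$ does let you apply item 3 at the next step, but $\tilde l$ is obtained by back-clause chains from the anchor, not as a successor of $\lambda_{\le j}$, so using it destroys the claim that the collected witnesses form a path compatible with $\sigma_A$. Your verification silently uses both witness systems at once; no single chain has both properties.

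The paper resolves the tension the other way around. It never fixes one source neighbourhood per target neighbourhood; it defines the transported strategy \emph{pointwise}, $\br\sigma^{n+1}_A(k') = \big(ST_{C_A(\theta^{n+1}_A(k')),C_A'(k')}\big)(\sigma_A\restr{C_A(\theta^{n+1}_A(k'))})(k')$, where the witness $\theta^{n+1}_A(k')$ of each history is inherited from its parent's witness through the skolemized item-3 maps $\rho_{\sigma_A,k,k'}$. A single witness system then simultaneously chains into $\sigma_A$-compatible paths (property~(\ref{id-theta-sigma-br})) and names, for each $k'$, exactly the source CKN used to define the strategy value at $k'$, so item 3 applies at every step of every outcome path. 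The coherence problem you worried about then reappears not in the verification but as the burden of showing that this pointwise-defined object is uniform and of closing its domain under $\sim^C_A$ -- which is what the paper's second stage (the $\hat\sigma^n_A$ sequence and the limit/default-action step) is for. To repair your proof, replace per-neighbourhood anchoring and witness re-selection by this per-history definition, and move the uniformity argument onto the strategy itself; as written, the step-by-step lifting in your third paragraph does not go through.
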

\begin{proof}
	First, notice that point 3 in Def.~\ref{def:bisim} can be rewritten
	as:
	\begin{enumerate}
		\item[\^3.] 
		For all histories $k \in C_A(h)$ and $k' \in C_A'(h')$ such that
		$k \altsim{A} k'$, for all partial strategies $\sigma_A \in
		PStr_A(C_A(h))$, there exists a mapping $\rho_{\sigma_A,k,k'} :
		succ(k',ST(\sigma_A)) \rightarrow succ(k,\sigma_A)$ such that for
		all histories $l'\!\!\in\! succ(k'\!, ST(\sigma_A))$,
		$\rho_{\sigma_A,k,k'}(l') \altsim{A} l'$.
	\end{enumerate}
	\cata{
		in which the mapping $\rho_{\sigma_A,k,k'} $ represents the \emph{skolemization}, in the original point 3, 
		of the existential quantifier over $l\in succ(k,\sigma_A)$,
		seen as a unary function on $l'\in succ(k',ST(\sigma_A))$ indexed by 
		$\sigma_A \in PStr_A(C_A(h))$, $k\in C_A(h)$ and $k'\in C_A'(h')$.}
	
	We now define the sequence $\big(dom^n(\sigma_A)\big)_{n \in \mathbb{N}}$,
	of sets of histories in $\G$ such that $k \in dom^n(\sigma_A)$ iff $k$
	can be reached in at most $n$ steps from $C_A(h)$ by applying actions
	compatible with strategy $\sigma_A$. Formally, $dom^0(\sigma_A) =  C_A(h)$ and 
	\[	 
	dom^{n+1}(\sigma_A)=  dom^n(\sigma_A) \cup \!\!\!\! \bigcup_{k \in dom^n(\sigma_A)} \!\!\!\! \big\{ C_A(l) \mid l \in succ(k, \sigma_A)\big\}
	\]
	
	Also, we denote by $\sigma^n_A$ the partial strategy resulting from
	restricting $\sigma_A$ to $dom^n(\sigma_A)$.
	
	We then define inductively
	a sequence $(\br \sigma^n_A)_{n \in \mathbb{N}}$ of partial strategies
	in $\G'$
	such that $dom(\br \sigma^n_A) \subseteq dom(\br\sigma^{n+1}_A)$ for every
	$n \in \mathbb{N}$, and, at the same time, a sequence of mappings 
	$\theta^n_A : dom(\br \sigma^n_A) \rightarrow dom^n(\sigma_A)$,
	satisfying the following property:
	\begin{equation}\label{id-theta-sigma-br}
		\theta^{n+1}_A(k') \in succ(\theta^n_A(k'_{\leq |k'|-1}),\sigma_A)
	\end{equation}
	
	%
	
	\noindent 
	The sequences 
	$(\br \sigma^n_A)_{n\in \mathbb{N}}$ and $(\theta^n_A)_{n\in \mathbb{N}}$ are defined as follows:
	\begin{enumerate}
		\item
		{\small	$dom(\br \sigma_A^0) = C'_A(h')$;}
		
		{\small	$dom(\br \sigma_A^{n+1}) =
			dom(\br \sigma_A^n) \cup \bigcup \big\{
			succ(k',\br \sigma_A^n) \mid k' \in dom(\br \sigma^n_A)\big \}$.}
		
		\item For all $k' \in dom (\br \sigma_A^0)$,
		$\br \sigma_A^0(k') = ST(\sigma^0_A)(k')$.
		
		\item For all $k' \in C'_A(h')$, we fix a unique $k\in C_A(h)$ such that $k \altsim{A} k'$ (which exists by point 2 in Def.~\ref{def:bisim}),
		and define $\theta^0_A(k') = k$.
		
		\item For all $k' \in dom(\br \sigma_A^{n+1})$, let $l' =
		k'_{\leq |k'|-1}$. Then, we set
		$\theta^{n+1}_A(k') = \rho_{\sigma_A,\theta^n_A(l'),l'}(k')$.
		\item For all $k' \in dom(\br \sigma_A^{n+1})$,\\
		$\br\sigma^{n+1}_A(k') = \big(ST_{C_A(\theta^n_A(k')),C_A'(k')}\big)(\sigma_A\restr{C_A(\theta^n_A(k'))})(k')$.
	\end{enumerate} 
	
	We prove property~(\ref{id-theta-sigma-br}) above, as well as the following:
	for every $k'\in dom(\br \sigma^n_A)$,
	\begin{align*}
		\theta^n_A(k') & \altsim{A} k' \quad (*) & 
		\theta^n_A(k') & \in dom^n(\sigma_A) \qquad (**)
	\end{align*}
	
	Property $(*)$
	holds by definition, since
	$\rho_{\sigma_A,\theta^n_A(l'),l'}(k') \altsim{A} k'$.
	%
	Property~(\ref{id-theta-sigma-br}) and $(**)$
	can be proved
	by induction on $n = |k'| - |h'|$, by observing that $(**)$
	holds for $n=0$; if (\ref{id-theta-sigma-br}) holds for $n+1$ then $(**)$
	holds for $n+1$ too; and finally, (\ref{id-theta-sigma-br})
	is an immediate consequence of the definition of $\theta^n_A$,
	property $(**)$
	Def.~\ref{def:bisim}. Note that property $(**)$
	ensures
	that indeed $\theta^n_A : dom(\br \sigma^n_A) \rightarrow
	dom^n(\sigma_A)$ as desired.
	
	The "limit" of the sequence of strategies
	$(\br \sigma^n_A)_{n\in \mathbb{N}}$
	is still a partial strategy and the domain of each $\br\sigma^n_A$ might not be closed under the
	common knowledge indistinguishability relation $\sim^C_A$. 
	So, we extend first the domain of each $\br\sigma^n_A$ to one
	which is closed under $\sim^C_A$ in $\G'$, by constructing
	the sequence of partial strategies
	$(\hat \sigma^n_A)_{n\in \mathbb{N}}$ and the sequence of mappings
	$\hat\theta^n_A : dom(\hat \sigma^n_A) \rightarrow dom^n(\sigma_A)$, as follows:
	\begin{enumerate}
		\item
		$dom(\hat \sigma_A^0) =dom(\br \sigma^0_A) = C'_A(h')$;
		
		$dom(\hat \sigma_A^{n+1}) =
		dom(\br \sigma_A^n) \cup \bigcup \big\{ C_A'(l') \mid 
		\exists k' \in dom(\hat \sigma^n_A) \\
		\hspace*{50pt} \text{ with } l\in succ(k',\br \sigma_A^n) \cap C_A'(k') \big \}$.
		
		\item For all $k' \in dom (\hat \sigma_A^0)$,
		$\hat \sigma_A^0(k') = ST(\sigma^0_A)(k')$.
		
		\item For all $k' \in C'_A(h')$, $\hat\theta^0_A(k') = \theta^0_A(k')$.
		
		\item For all $k' \in dom(\hat \sigma_A^{n+1})$, let $l' =
		k'_{\leq |k'|-1}$. Then, we set 
		$\hat\theta^{n+1}_A(k') = \rho_{\sigma_A,\theta^n_A(l'),l'}(k')$.
		\item For all $k' \in dom(\hat \sigma_A^{n+1})$,\\
		$\hat\sigma^{n+1}_A(k') = \big(ST_{C_A(\theta^n_A(k')),C_A'(k')}\big)(\sigma_A\restr{C_A(\theta^n_A(k'))})(k')$.
	\end{enumerate} 
	
	We
	observe that properties $(*)$ and $(**)$
	still hold for
	$\hat \sigma^n_A$ and $\hat \theta^n_A$, though property
	(\ref{id-theta-sigma-br}) does not in general.  In this way we get
	that $dom(\hat \sigma^n_A) \supseteq dom(\br \sigma^n_A)$ and for every
	$k' \in dom(\br \sigma^n_A)$, $n \in \mathbb{N}$,
	$\hat \sigma^n_A(k')
	= \br \sigma^n_A(k')$.
	As a result, the ``limit'' partial strategy $\hat \sigma_A =
	{\displaystyle\bigcup_{n\in \mathbb{N}}} \hat \sigma^n_A$ defined as
	$\hat\sigma_A(k') = \hat \sigma^{|k'|-|h'|}_A(k')$ is also uniform and
	its domain $dom(\hat \sigma_A)$ is closed under $\sim^C_A$.  We then
	transform it into a (total) uniform strategy $\sigma_A'$ by imposing
	a fixed action $a_0 \in Act$ wherever $\hat \sigma^n_A$ is undefined,
	that is,
	$\sigma_A'(k') = \hat \sigma_A(k')$ for $k' \in dom(\hat \sigma_A)$ and 
	$\sigma_A'(k') = a_0$ otherwise.
	
	
	Finally, to prove property ($*$) for the common knowledge semantics,
	consider a path $\lambda' \!\in\! out^{\G'}_{ck}(h',\sigma_A') $ and the
	sequence $\big(\theta^n_A(\lambda'_{\leq
		|h'|+n})\big)_{n\in \mathbb{N}}$ of histories in $\G$.  By
	construction, $\theta^{n+1}_A(\lambda'_{\leq |h'|+n+1}) \!\in\!
	succ(\theta^n_A(\lambda'_{\leq |h'|+n}),\sigma_A)$ and
	$\theta^n_A(\lambda'_{\leq |h'|+n}) \!\altsim{A} \lambda'_{\leq |h'|+n}$,
	which means that this sequence of histories is in fact a path
	$\lambda$ in $\G$ which is compatible with $\sigma_A$ and satisfies
	$\lambda \altsim{A} \lambda'$, which ends the
	proof.
\end{proof}


%

By using Lemma~\ref{lemma:infinite-strats2} we are finally able to
prove the main preservation result of this paper.


%
\begin{theorem}\label{thm-bisim-mem}
	Let
	$h \in Hist(\G)$ and $h'\in Hist(\G')$ be histories such that
	$h \altbisim{A} h'$, and $\lambda \in Path(\G)$ and $\lambda' \in
	Path(\G')$ be paths such that $\lambda \altbisim{A} \lambda'$.
	Then, for every history $A$-formula $\phi$, path $A$-formula $\psi$,
	$m \in \mathbb{N}$, and $x \in \{ subj,  obj, ck\}$,
	\begin{eqnarray*}
		(\G,h) \models_x \phi & \text{iff} & (\G',h') \models_x \phi\\
		(\G,\lambda, m) \models_x \psi & \text{iff} & (\G',\lambda', m) \models_x \psi
	\end{eqnarray*}
\end{theorem}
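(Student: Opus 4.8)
The plan is to prove both biconditionals simultaneously by mutual induction on the structure of the history formula $\phi$ and the path formula $\psi$, using the symmetry of bisimulation — both $\altbisim{A}$ and its converse $\altbisim{A}^{-1}$ are simulations — to reduce each ``iff'' to a single implication proved twice. For the base case $\phi = p$ I would read off the equivalence directly from clause~\ref{it:agree} of Def.~\ref{def:bisim}, which gives $\pi(last(h)) = \pi'(last(h'))$ whenever $h \altbisim{A} h'$. The Boolean cases, for both history and path formulas, follow at once from the induction hypothesis, since satisfaction of the connectives is pointwise.

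The crux is the strategic modality $\dlangle A \drangle \psi$, and here the whole weight is carried by Lemma~\ref{lemma:infinite-strats2}. For the forward implication I would assume $(\G,h) \models_x \dlangle A \drangle \psi$, witnessed by a strategy $\sigma_A$ all of whose $x$-outcomes at $h$ satisfy $\psi$. Since the bisimulation $\altbisim{A}$ is in particular a simulation, I instantiate the lemma with it to obtain a strategy $\sigma'_A$ in $\G'$ such that every $\lambda' \in out_x(h', \sigma'_A)$ is matched by some $\lambda \in out_x(h, \sigma_A)$ with $\lambda \altbisim{A} \lambda'$ (the relation the lemma returns is exactly the one fed into it). As $\lambda$ satisfies $\psi$ at position $|h|$, the path-formula induction hypothesis transfers this to $\lambda'$ at the same position, so $\sigma'_A$ witnesses $(\G',h') \models_x \dlangle A \drangle \psi$. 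The backward implication is symmetric: I apply the lemma with $\G$ and $\G'$ interchanged to the converse simulation $\altbisim{A}^{-1}$, transporting a witnessing strategy from $\G'$ back to $\G$. This step silently uses $|h| = |h'|$, needed so that the satisfaction positions $|h|$ and $|h'|$ coincide; I would justify this by noting that the synchronous, point-wise indistinguishability invoked in clause~2 of Def.~\ref{def:bisim} forces bisimilar histories to be equi-length.

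For the path-formula equivalence I would go through the remaining clauses of Def.~\ref{semantics}, exploiting that $\lambda \altbisim{A} \lambda'$ unfolds, by definition, to $\lambda_{\leq j} \altbisim{A} \lambda'_{\leq j}$ for every $j$, so that bisimilarity is available at every position. The embedded history formula $(\G, \lambda, m) \models_x \phi$ reduces to the history equivalence at the bisimilar prefixes $\lambda_{\leq m} \altbisim{A} \lambda'_{\leq m}$, and the operators $X$, $\mathbf{Y}$, $U$ are dispatched by unfolding their clauses and invoking the induction hypothesis at the shifted positions $m+1$, $m-1$, and the relevant indices $j, k \geq m$; each is a routine rewriting with no new content.

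I expect the only real obstacle to sit in the modality case, yet the genuinely difficult construction there — building the matching strategy $\sigma'_A$ — has already been discharged inside Lemma~\ref{lemma:infinite-strats2}. What remains for me is the bookkeeping of invoking it in both directions, the identification of the lemma's output relation with the bisimulation I plugged in, and the small observation that bisimilar histories have equal length so the satisfaction positions align. Everything outside the modality is a standard structural induction.
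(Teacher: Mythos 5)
Your proposal is correct and takes essentially the same route as the paper's proof: mutual structural induction, the atomic case from clause 1 of Def.~\ref{def:bisim}, routine unfolding for the temporal operators, and the $\dlangle A \drangle$ case discharged by Lemma~\ref{lemma:infinite-strats2} in one direction and by the symmetry of bisimulation (applying the lemma to the converse simulation) in the other. One small remark: the equi-length fact $|h|=|h'|$ that you try to extract from clause 2 does not actually follow from it (instantiating $k'=h'$ only yields some $k$ with $h\sim_i k$ and $k\altbisim{A} h'$, saying nothing about $|h'|$); the paper simply asserts $|h|=|h'|$ without proof, so both arguments rest on the same implicit assumption that bisimilar histories have equal length.
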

%
\begin{proof}
	The proof is by mutual induction on the structure of $\phi$ and
	$\psi$.
	
	The case for propositional atoms is immediate as for $x \in \{ subj,
	obj, ck\}$, $(\G, h) \!\models_x\! p$ iff $p \!\in\! \pi(last(h))$, iff
	$p \!\in\! \pi'(last(h')\!)$ by item 1 in Def.~\ref{def:bisim}, iff
	$(\G',h') \models_x p$.  The inductive cases for propositional
	connectives are also immediate.
	
	For $\psi = \phi$, suppose that $(\G, \lambda, m) \models_x \psi$, that is,
	$(\G, \lambda_{\leq m}) \models_x \phi$.  By assumption,
	$\lambda_{\leq m} \altbisim{A} \lambda'_{\leq m}$ as well, and by induction
	hypothesis $(\G', \lambda'_{\leq m}) \models_x \phi$. Thus,
	$(\G', \lambda', m) \models_x \psi$.
	
	For $\psi = X \psi'$, suppose that 
	$(\G, \lambda, m+1) \models_x \psi'$.
	By the induction hypothesis, $(\G', \lambda', m+1) \models_x \psi'$. Thus,
	$(\G', \lambda', m) \models_x \psi$.  The inductive cases for $\psi = \mathbf{Y} \psi'$ and
	$\psi = \psi'
	U \psi''$
	is similar.
	
	Finally, for $\phi = \dlangle A \drangle \psi$, $(\G, h) \models_x \phi$
	iff for some strategy $\sigma_{A}$, for all $\lambda \in
	out^{\G}_x(h, \sigma_A)$, $(\G, \lambda, |h|) \models_x \psi$.  By
	Lemma.~\ref{lemma:infinite-strats2}, there exists stategy $\sigma'_A$
	s.t.~for all $\lambda' \in out^{\G'}_x(h', \sigma'_A)$, there exists
	$\lambda \in out^{\G}_x(h, \sigma_A)$
	s.t.~$\lambda \altbisim{A} \lambda'$.  Since $|h| = |h'|$, by the
	induction hypothesis $(\G, \lambda, |h|) \models_x \psi$ iff
	$(\G', \lambda', |h'|) \models_x \psi$.  Hence, $(\G',
	h') \models_x \phi$.
\end{proof}

\section{Bisimulations Games} \label{bisimgames}


In this section we introduce bisimulation games played on two iCGS and
we prove that the existence of a winning strategy for the \duplic{}
coalition is equivalent to the existence of a bisimulation between the
iCGS.
%
\begin{definition}[Bisimulation Game] \label{def:3bisim-games}
	Given iCGSs $\G$ and $\G'$, defined on the same sets $Ag$ of agents
	and $AP$ of atoms, a relation $R \subseteq Hist(\G) \times Hist(\G')$,
	and a pair $(h_0,h_0') \in Hist(\G) \times Hist(\G')$ of histories, we
	define the bisimulation game $\B(\G,\G',R,h_0,h_0')$ as a turn-based game of
	imperfect information between \emph{four} players: 
	\pdupl,\pspoil, called P-players, and \idupl, \ispoil, called I-players,
	organized in two coalitions: 
	the \duplic{} coalition $\{$\pdupl, \idupl$\}$ and
	the \spoiler{} coaltion $\{$\pspoil, \ispoil$\}$, 
	with both P-players having perfect information while 
	both I-players have \textbf{the same} imperfect information.
	
	\cata{
		At a higher-level, the bisimulation game is a turn-based game in which
		the I-players are in charge of defining the strategy simulators, in
		the sense that \ispoil{} chooses a partial strategy for $A$ over some
		common knowledge neighbourhood in one of the game structures,
		and \idupl{} responds with an appropriate partial strategy for $A$ in
		the other game structure.  Then the perfectly-informed players come
		into play, by appropriately defining mappings between histories
		compatible with the chosen strategies, which represent
		"skolemizations" of conditions $(2)$ and $(3)$ in
		Def.~\ref{def:bisim}.}
	
	\cata{
		The necessity for \ispoil{} and \idupl{} to only have imperfect information comes from the fact 
		that the same strategy profile has to be chosen by both players at positions which belong to the same common knowledge neighborhood in both game structures,
		since perfect information might be used by each player to trick the other player by choosing 
		a strategy which is not uniform for some agent in coalition $A$.
	}
	
	More formally, the game proceeds as follows:
	
	\begin{enumerate}
		\setcounter{enumi}{-1}
		\item The positions of the game form a \textbf{labeled tree}, denoted $T(\B)$, with the root  position labeled $(h_0,h_0')$.
		The rest of positions and their labels are given below.
		
		\item Each position $(h,h')$ where $\pi(h) \neq \pi'(h')$ or
		$(h,h') \not \in R$ is \textbf{winning} for the \spoiler{} coalition.
		
		\item Each position labeled $(h,h') \in Hist(\G) \times Hist(\G')$ belongs to \ispoil, 
		and both I-players receive observation $C_A(h) \times C_A'(h')$. 
		In each such position \ispoil{} may choose between two types of transitions:
		\begin{enumerate}
			\item For each $\sigma_A \in PStr(C_A(h))$, a transition to a successor (of the current position in the tree) 
			labeled $(h,h',\sigma_A,L)$. 
			\item For each $\sigma_A' \!\in\! PStr(\!C_A'(h')\!)$ a transition to a successor labeled $(\!h,h'\!\!,\sigma_A',\!R)$. 
		\end{enumerate}
		
		\item Each position $(h,h',\sigma_A,L)$ belongs to \idupl{} and both I-players observe $\sigma_A$.
		\idupl{} may choose, for each $\sigma_A' \in PStr(C_A'(h'))$, a transition to a 
		successor labeled $(h,h',\\ \sigma_A,
		\sigma_A',L)$.
		\item Each position $(h,h',\sigma_A',R)$ belongs to \idupl{} and both I-players observe $\sigma_A'$.
		\idupl{} may choose, for each $\sigma_A \in PStr(C_A(h))$, a transition to a 
		successor labeled $(h,h',\\ \sigma_A,
		\sigma_A',R)$.
		
		\item Each position $(h,h',\sigma_A,\sigma_A',L)$ belongs to \pspoil\
		and \pspoil{} may choose, for each $k' \in C_A'(h')$, a transition to 
		a successor labeled $(h,h',\sigma_A,\sigma_A',k',L)$.
		
		In all positions at points 5-12,  both I-players observe $C_A(h)\times C_A'(h')$
		\item Each position $(h,h',\sigma_A,\sigma_A',R)$ belongs to \pspoil, 
		and \pspoil{} may choose, for each $k \in C_A(h)$, a transition to 
		a successor labeled $(h,h',\sigma_A,\sigma_A',k,R)$.
		
		\item Each position $(h,h',\sigma_A,\sigma_A',k',L)$ belongs to \pdupl, 
		and \pdupl{} may choose, for each $k \in C_A(h)$, a transition to 
		a successor labeled $(h,h',\sigma_A,\sigma_A',\allowbreak k,k',L)$.
		\item Each position $(h,h',\sigma_A,\sigma_A',k,R)$ belongs to \pdupl, 
		and \pdupl{} may choose, for each $k' \in C_A'(h')$, a transition to 
		a successor labeled $(h,h',\sigma_A,\sigma_A',\allowbreak k,k',R)$.
		
		\item Each position $(h,h',\sigma_A,\sigma_A',k,k',L)$ belongs to \pspoil, 
		This position is \textbf{winning} for the \spoiler{} coalition if $\pi(k) \neq \pi'(k')$ or
		$(k,k') \not \in R$.
		In this position \pspoil{} may choose, for each $l' \in succ(k',\sigma_A')$, a transition to 
		a successor labeled $(h,h',\sigma_A,\sigma_A',k,k',l',L)$.
		\item Each position $(h,h',\sigma_A,\sigma_A',k,k',R)$ belongs to \pspoil, 
		This position is \textbf{winning} for the \spoiler{} coalition if $\pi(k) \neq \pi'(k')$ or
		$(k,k') \not \in R$.
		In this position \pspoil{} may choose, for each $l \in succ(k,\sigma_A)$, a transition to 
		a successor labeled $(h,h',\sigma_A,\sigma_A',k,k',l,R)$.
		
		\item Each position $(h,h',\sigma_A,\sigma_A',k,k',l',L)$ belongs to \pdupl, 
		and \pdupl{} may choose, for each $l \in succ(k,\sigma_A)$, a transition to 
		a successor labeled $(l,l')$ from where Rule 1 above applies.
		\item Each position $(h,h',\sigma_A,\sigma_A',k,k',l,R)$ belongs to \pdupl, 
		and \pdupl{} may choose, for each $l' \in succ(k',\sigma_A')$, a transition to 
		a successor labeled $(l,l')$ from where Rule 1 above applies.
	\end{enumerate}
\end{definition}


In the sequel, given a position $p \in T(\B)$, we denote $Obs(p)$ the
set of positions which give the same observation as $p$ to any of the
I-players.  Also, the set of strategies for the \duplic{} (resp.~\spoiler{}) coalition is
denoted $\Sigma_{Dupl}$ (resp.~$\Sigma_{Spoil}$).
Further, the set of positions which are compatible with a strategy
$\sigma \in \Sigma_{Dupl}\cup \Sigma_{Spoil}$ is denoted
$Comp(\sigma)$.  Finally, for each position $p$ we denote $lab(p)$ its
label, as per Def.~\ref{def:3bisim-games} of
bisimulation game.


Next, we prove that bisimulation relations and bisimulation games are
equivalent characterisations of iCGS.  To this end, given a history
$h_0\in Hist(\G)$, we define the \emph{pointed} iCGS $\G(h_0)$ in
which the initial state is $h_0$ 
and the transitions are modified accordingly. 

\begin{theorem} \label{game}
	For any $A$-bisimulation relation $R$ between $\G(h_0)$ and $\G'(h_0)$ 
	the \duplic{} coalition 
	has a strategy to win the bisimulation game $\B(\G\!, \G'\!\!,R,h_0,\! h'_0)$.
	
	Conversely, if the \duplic{} coalition has a joint strategy
	$\sigma_D$ to win the game $\B(\G, \G',R,h_0,h'_0)$, then
	there exists an $A$-bisimulation $\altbisim{A}$ with $\altbisim{A}
	\subseteq R \cap \big\{(h,h') \mid (h,h') \in
	out\big(p_{h_0,h_0'},\sigma_D\big) \big\}$,
	where $p_{h_0,h_0'}$ is the initial position of the bisimulation game $\B(\G, \G',R,h_0,h'_0)$.  
\end{theorem}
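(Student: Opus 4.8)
The plan is to prove the two implications separately, throughout exploiting the intended reading of the game: the imperfectly‑informed players \ispoil/\idupl{} build the strategy simulator, while the perfectly‑informed players \pspoil/\pdupl{} build the Skolem witnesses for the existential quantifiers in clauses 2 and 3 of Def.~\ref{def:bisim}. Since \spoiler{} wins exactly by reaching a position flagged winning for \spoiler{} (clauses 1, 9, 10 of Def.~\ref{def:3bisim-games}), for \duplic{} the game is a safety game, and ``\duplic{} wins'' means that every play consistent with the \duplic{} strategy avoids all flagged positions, i.e.\ keeps every \emph{checked pair} (a basic label, or a matched pair $(k,k')$ at a clause 9/10 position) inside $R$ with matching labels.

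For the first implication I fix the bisimulation $R=\altbisim{A}$ together with a strategy simulator $ST$ for $\altbisim{A}$ and $ST'$ for $\altbisim{A}^{-1}$ (both exist since $R$ is a bisimulation). I let \idupl{} answer a left move $(h,h',\sigma_A,L)$ by $ST(\sigma_A)$ and a right move $(h,h',\sigma_A',R)$ by $ST'(\sigma_A')$; this is a legal \emph{uniform} strategy precisely because \idupl's observation at such a position is the datum $\big(C_A(h)\times C_A'(h'),\sigma_A\big)$ and property~(\ref{prop:ST-independence}) makes $ST(\sigma_A)$ depend only on this. \pdupl, being perfectly informed, Skolemises: at a matching position it answers \pspoil's probe $k'\in C_A'(h')$ by a $k\in C_A(h)$ with $k\altsim{A}k'$, which exists by iterating clause 2 along the $\sim^C_A$‑chain witnessing $k'\in C_A'(h')$; at a successor position it answers $l'\in succ(k',ST(\sigma_A))$ by some $l\in succ(k,\sigma_A)$ with $l\altsim{A}l'$, which exists by clause 3. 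A straightforward induction on the depth of the play then gives the invariant ``every checked pair lies in $\altbisim{A}$'', so no \spoiler{}-winning position is ever reached and \duplic{} wins.

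For the converse I read the same dictionary backwards. From the \idupl{} component of $\sigma_D$ I define $ST_{C_A(h),C_A'(h')}(\sigma_A)$ to be \idupl's response at $(h,h',\sigma_A,L)$ (and dually $ST'$ from the right moves); because the two I-players share the same imperfect information and observe only $C_A(h)\times C_A'(h')$ and $\sigma_A$, this response is constant on each common-knowledge class, so $ST$ is well defined and automatically satisfies independence~(\ref{prop:ST-independence}). I then take $\altbisim{A}$ to be the set of pairs checked along plays consistent with $\sigma_D$, so that $\altbisim{A}\subseteq R\cap\{(h,h')\mid(h,h')\in out(p_{h_0,h_0'},\sigma_D)\}$ and clause 1 holds because \duplic{} never stands on a flagged position. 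Each such basic pair $(k,k')$ is moreover one from which the \duplic{} coalition still wins the re-rooted game (the subtree of $\sigma_D$ below it witnesses this), and $(k,k')$ sees the same global $ST$ as any $(h,h')$ with $k\in C_A(h)$. Clauses 2 and 3 are then obtained by replaying from $(k,k')$: \ispoil{} plays $\sigma_A$, \idupl{} is forced to $ST(\sigma_A)$, and \pdupl's winning answers to \pspoil's probes furnish the matched history and the successor witness demanded by the clauses.

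The step I expect to be the main obstacle is exactly this last verification of clause 3, because of a mismatch of quantifier order. Clause 3 fixes a related pair $(k,k')$ and then asks for a successor witness for \emph{every} $\sigma_A$, whereas in the game \pdupl{} chooses the matching history only after seeing $\sigma_A$, so a priori the match depends on $\sigma_A$ and need not be the first component of the pair we started from. Reconciling the two requires showing the matches can be taken coherently—on the diagonal, answering the current right component $k'$ of a basic position by its own left component $k$—and that this modification of \pdupl{} stays winning. I would establish this by a subtree-substitution argument on $T(\B)$: at a reachable basic position $(k,k')$ the subtree of $\sigma_D$ is itself a winning \duplic{} strategy for $\B(\G,\G',R,k,k')$, and since $ST$ is global across the CKN class one may graft these subtrees to realise the diagonal match while keeping every checked pair inside $\altbisim{A}$. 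Once coherence is secured, clause 3—and, by the same argument applied to $ST'$ and the right moves, the symmetric clauses making $\altbisim{A}$ a genuine bisimulation—follows, completing the equivalence.
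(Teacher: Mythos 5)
Your proposal follows the paper's own proof almost step for step. In the forward direction it is identical: \idupl{} answers with $ST(\sigma_A)$ (resp.\ $ST'(\sigma_A')$), uniformity coming from property~(\ref{prop:ST-independence}); \pdupl{} plays the Skolem witnesses of clauses 2 and 3 of Def.~\ref{def:bisim} (the paper's restated points with the maps $\theta^{\leftarrow}_{\sigma_A}$ and $\rho_{\sigma_A,k,k'}$); an induction along plays shows no \spoiler{}-winning position is ever reached. In the converse direction the paper does exactly what you call reading the dictionary backwards: $ST_{C_A(h),C_A'(h')}(\sigma_A)$ is defined as \idupl's response $\sigma_{ID}(p^1_{\sigma_A})$, well-definedness being argued from the fact that the I-players observe only $C_A(h)\times C_A'(h')$ and $\sigma_A$, and the relation $\altbisim{A}^{\sigma_D}$ is taken to consist of pairs occurring along $\sigma_D$-consistent plays (the paper takes only the labels of basic positions, whereas you also include the matched pairs at clause 9/10 positions).

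The divergence is precisely the step you flag as the main obstacle, and you are right that it is a genuine issue: in the game, \pdupl's match $k$ for a probe $k'$ is chosen \emph{after} seeing $\sigma_A$, so the induced pairing is $\sigma_A$-dependent, while clause 3 first fixes a related pair $(k,k')$ and then demands witnesses in $succ(k,\sigma_A)$ for \emph{every} $\sigma_A$. But two caveats. First, the paper does not resolve this either: it defines the clause-3 Skolem maps only for the $\sigma_A$-dependent pairs $\big(\theta^{\leftarrow}_{\sigma_A}(k'),k'\big)$ produced by \pdupl{} (pairs which are not even in its declared relation $\altbisim{A}^{\sigma_D}$, since that relation contains only basic-position labels), and its only disclaimer---that one need not build the \emph{maximal} bisimulation---addresses dependence on the starting position $p$, not dependence on $\sigma_A$. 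Second, your proposed repair does not close the gap as sketched: grafting the subtree of $\sigma_D$ rooted at the basic position labeled $(k,k')$ does not yield witnesses in $succ(k,\sigma_A)$, because inside that subtree \pdupl{} again answers the probe $k'$ with a possibly different history $\tilde k \in C_A(k)$, so the diagonal property you need merely reappears one round later; the recursion never bottoms out, and no uniformization argument is given showing that a winning \pdupl{} can be normalized to make its matching $\sigma_A$-independent. So: same approach as the paper, a sharper diagnosis of the weak point than the paper's own text, but the fix is a sketch of an argument that neither you nor the paper actually supplies.
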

\begin{proof}
	We prove this theorem by double inclusion.
	
	$\Rightarrow$ Suppose that $\altbisim{A}$ is an $A$-bisimulation.
	For convenience, we utilize, as in the proof of Lemma
	\ref{lemma:infinite-strats2}, the restated variant (\^3) of point
	(3) in Def.~\ref{def:bisim} of $A$-simulations, which assumes a
	mapping $\rho_{\sigma_A,k,k'} : succ(k',ST(\sigma_A))\rightarrow
	succ(k,\sigma_A)$ that ensures that for any $l' \in
	succ(k',ST(\sigma_A))$, we have $\rho_{\sigma_A,k,k'}(l')
	\altbisim{A} l'$.  Since $\altbisim{A}$ is also a reverse
	simulation, we symmetrically consider $\rho'_{\sigma_A',k,k'} :
	succ(k,ST'(\sigma_A'))\rightarrow succ(k',\sigma_A')$
	s.t.~$\rho'_{\sigma_A',k,k'}(l') \!\altbisim{A}\! l$ for any $l \!\in\!
	succ(k,ST'(\sigma_A')\!)$.
	
	Similarly, we restate point (2) in Def.~\ref{def:bisim} in functional
	terms: \def\thetalar{\theta^{\leftarrow}}
	\def\thetarar{\theta^{\rightarrow}}
	\begin{enumerate}
		\item [\^2] For for each $\sigma_A \in PStr(C_A(h))$ there exists a mapping $\thetalar_{\sigma_A} : C_A'(h') \rightarrow
		C_A(h)$ such that for any $i\in A$, if $k_1'\sim_i' k_2'$ then
		$\thetalar_{\sigma_A}(k_1') \sim_i \thetalar_{\sigma_A}(k_2')$.
	\end{enumerate}
	\cata{ To see that this formulation is equivalent to item (2) in
		Def.~\ref{def:bisim}, note first that this point restates as the
		first-order formula $\varphi = \forall k' \in Hist(\G') \exists k
		\in Hist(\G) \Big( h' \sim'_i k' \rightarrow h \sim_i k \wedge k
		\altsim{A} k'\Big)$.  This formula is equivalent to $\forall
		\sigma_A.(\varphi \wedge \sigma_A \in PStr(C_A(h))) $ by the
		Universal Generalization Rule since $\sigma_A$ is not free in
		$\varphi$. Then $\thetalar_{\sigma_A} : C_A'(h') \rightarrow C_A(h)$
		corresponds to the \emph{skolemization} of $\exists k \in Hist(\G)$
		(seen as a unary function indexed by $\sigma_A$).  }
	
	
	By symmetry, for each $\sigma_A' \in PStr(C_A'(h'))$ we denote
	$\thetarar_{\sigma_A'} :C_A(h) \rightarrow C_A'(h')$ the reverse
	mapping, which exists since $\altbisim{A}$ is also a (reverse)
	simulation between $\G'$ and $\G$.
	
	Then, we define the strategy profile $(\sigma_{ID},\sigma_{PD})$ for
	the \duplic{} coalition as follows: for any position $p$,
	\begin{enumerate}
		\item If $lab(p) = (h,h',\sigma_A,L)$ then $\sigma_{ID}(p) =
		ST(\sigma_A)$, and if $lab(p) = (h,h',\sigma_A',R)$ then
		$\sigma_{ID}(p) = ST'(\sigma_A')$.
		\item If $lab(p) = (h,h',\sigma_A,\sigma_A',k',L)$ then $\sigma_{PD}(p) = \thetalar_{\sigma_A}(k')$,
		and if $lab(p) \!=\! (h,h',\allowbreak\sigma_A,\sigma_A',k,R)$ then $\sigma_{PD}(p) \!=\! \thetarar_{\sigma_A'}(k)$.
		\item If $lab(p) = (h,h',\sigma_A,\sigma_A',k,k',l',L)$ then $\sigma_{PD}(p) = 
		\rho_{\sigma_A,k,k'}(l')$ and 
		if $lab(p) = (h,h',\sigma_A,\sigma_A',k,k',l,R) \\$ then $\sigma_{PD}(p) = \rho'_{\sigma_A',k,k'}(l)$.
	\end{enumerate}
	
	Since $\altbisim{A}$ is an $A$-bisimulation and $ST$, $ST'$ are
	strategy simulators that do not depend on $h$ or $h'$, strategy
	$\sigma_{ID}$ is uniform, that is, for all positions $p,p'$ belonging
	to \idupl{} and this player receives the same sequence of observations 
	along the history that leads to $p$ and the history that leads to $p'$,
	we must have   $\sigma_{ID}(p)= \sigma_{ID}(p')$.   
	Then, all the
	runs that are compatible with the strategy profile $\sigma_D$ never
	reach a position $(h,h')$ where \spoiler{} wins:
	\begin{enumerate}
		\item[a.] For $lab(p) = (h,h',\sigma_A,\sigma_A',k',L)$,
		$lab(succ(p,\sigma_D)) =
		(h,h'\!,\sigma_A,\sigma_A',\thetalar_{\sigma_A}(k'),k',\allowbreak L)$.  But
		$\thetalar_{\sigma_A}(k')\!\altsim{A}\! k$ by
		point
		(\^2)
		for Def.~\ref{def:bisim}, which implies that
		$succ(p,\sigma_D)$ is not winning for \spoiler{}.  A similar
		argument holds for $lab(p) = (h,h',\sigma_A,\allowbreak\sigma_A', k,R)$.
		
		\item[b.] For $lab(p) \!=
		\!(h,h'\!,\sigma_A,\sigma_A',k,k'\!,l'\!,L)$,
		$lab(succ(p,\sigma_D)) \\=\!
		(h,h'\!,\sigma_A,\sigma_A',k,k',\rho_{\sigma_A,k,k'}(l'),l',L) $.
		But $\rho_{\sigma_A,k,k'}(l') \\ \altsim{A} l'$ by
		point
		(\^3) for Def.~\ref{def:bisim}, which implies that
		$succ(p,\\ \sigma_D)$ is not winning for \spoiler{}.  A similar
		argument holds for $lab(p) = (h,h',\sigma_A,\sigma_A',k,k',l,R)$.
	\end{enumerate}
	
	$\Leftarrow$ Suppose now that we have a winning joint strategy
	$\sigma_D = (\sigma_{ID},\sigma_{PD})$ for the \duplic{} coalition.
	Then, for each position $p$ that is consistent with $\sigma_D$, with
	label $lab(p) = (h,h') \in Hist(\G)\times Hist(\G')$, we set $h
	\altbisim{A}^{\sigma_D} h'$.
	
	The strategy simulators are then defined as follows: for each $h
	\altbisim{A}^{\sigma_D} h'$ with $(h,h') = lab(p)$ for some position
	$p$ in the bisimulation game, and each $\sigma_A \in PStr(C_A(h))$,
	note first that we have a \ispoil{} transition to a position
	$p^1_{\sigma_A}$ labeled $(h,h',\sigma_A,L)$ and then a \idupl{}
	transition to a position $p^2_{\sigma_A}$ labeled
	$(h,h',\sigma_A,\sigma_{ID}(p^1_{\sigma_A}),L)$.  Then, we set
	$ST_{C_A(h),C_A'(h')}(\sigma_A) = \sigma_{ID}(p^1_{\sigma_A})$.  Note
	that this definition is independent of the choice of $p$ since, by
	construction, all positions $\br p$ with $lab(\br p) = lab(p)$ are
	indistinguishable for \idupl{}, as he observes only $C_A(h)\times
	C_A'(h')$ and $\sigma_A$. Hence, $\sigma_{ID}(p^1_{\sigma_A}) =
	\sigma_{ID}({\br p}^1_{\sigma_A})$, where ${\br p}^1_{\sigma_A}$ is
	the position resulting by \ispoil{} choosing $\sigma_A$ in position
	$\br p$.  This ensures that $ST$ is indeed a strategy simulator
	according to Equation~(\ref{prop:ST-independence}).
	
	
	Furthermore, the mappings $\thetalar_{\sigma_A}$ are defined as
	follows:
	given position $p$ with $lab(p) = (h,h')$ as above, then for each $\sigma_A \in
	PStr(C_A(h))$, denote first $p^1_{\sigma_A}$ the position  
	resulting from \ispoil{}
	executing transition $\sigma_A$;  further denote 
	$p^2_{\sigma_A}$ the position  which belongs to \pspoil{} 
	after \pdupl{} executes action $\sigma_D(p^1_{\sigma_A})$. 
	Note then that, in position $p^2_{\sigma_A}$, for each $k'
	\in C_A'(h')$, \pspoil{} has a move to a position $p^3_{\sigma_A,k'}$
	which is labeled $(h,h',\sigma_A,\sigma_{ID}(p^1_{\sigma_A}),k',L)$
	which belongs to \pdupl{}.  Then we define $\thetalar_{\sigma_A}(k') =
	\sigma_{PD}(p^3_{\sigma_A,k'})$.
	
	This definition is dependent on the choice of the starting position $p$, but this
	is not an issue for our definition of 
	$\altbisim{A}^{\sigma_D}$ since there is no requirement for building
	the \emph{maximal} bisimulation associated with a bisimulation game.
	Note further that this definition, together with the fact that
	$\sigma_D$ is winning and hence position $p^4_{\sigma_A,k'} =
	succ(p^3_{\sigma_A,k'},\sigma_{PD}(p^3_{\sigma_A,k'}))$ is not winning
	for the \spoiler{} coalition, implies that $\pi(\thetalar_{\sigma_A}(k')) =
	\pi'(k')$ and further ensures that $\thetalar_{\sigma_A}(k')$
	satisfies the restated point (\^2) for Def.~\ref{def:bisim}.
	
	Finally, by proceeding from position $p^4_{\sigma_A,k'}$, which again
	belongs to \pspoil{}, for each \pspoil{}'s choice of some history $l' \in
	succ(k',\sigma_{ID}(p^1_{\sigma_A}))$, the game proceeds to a position
	$p^5_{\sigma_A,k',l'}$ that belongs to \pdupl{} and is labeled with
	$lab(\allowbreak p^5_{\sigma_A,k',l'}) \!=\!
	(h,h'\!,\sigma_A,\sigma_{ID}(p^1_{\sigma_A}),\sigma_{PD}(p^3_{\sigma_A,k'}),k'\!,l'\!,L)$.
	We then define $\rho_{\sigma_A,\thetalar_{\sigma_A}(k'),k'}\allowbreak(l') =
	\sigma_{PD}(p^5_{\sigma_A,k',l'})$.  Also note that this definition is
	dependent on the choice of the initial position $p$ with no loss of
	generality, and the fact that $\sigma_D$ is winning ensures that the
	position resulting from $p^5_{\sigma_A,k',l'}$ by \pdupl{}'s choice
	and labeled $(\sigma_{PD}(p^5_{\sigma_A,k',l'},l')$, is not winning
	for \spoiler{}. In particular, $\pi(p^5_{\sigma_A,k',l'}) =
	\pi'(l')$ and
	$\rho_{\sigma_A,\thetalar_{\sigma_A}(k'),k'}$ satisfies the restated
	point (\^3) for Def.~\ref{def:bisim}.
	
	Similar considerations give us the definitions for
	$\thetarar_{\sigma_A'}$ and $\rho'_{\sigma_A',k,k'}$ for each
	$\sigma_A' \!\in\! PStr(C_A'(h'))$, $k'\!\in\! C_A'(h')$ and $k\!\in\!
	C_A(h)$ with $k\altbisim{A} k'$.  This completes the proof of
	Theorem \ref{game}.
\end{proof}

We conclude this section with some immediate properties about our
bisimulation games and bisimulation relations.
\begin{proposition}\label{prop:tot}
	\begin{enumerate}
		\item The set of bisimulations associated with the same strategy simulator forms a complete lattice w.r.t.~set inclusion.
		\item If two iCGS $\G$ and $\G'$ are bisimilar, then the \duplic{} coalition has a winning strategy in the 
		bisimulation game $\B(\G,\G',Tot)$ where $Tot$ is the total relation $Hist(\G) \times Hist(\G')$.
	\end{enumerate}
\end{proposition}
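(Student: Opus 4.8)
For claim~(1), the plan is to fix a pair of strategy simulators $(ST,ST')$ and show that the collection $\mathcal{L}$ of all $A$-bisimulations whose forward and reverse simulations use, respectively, $ST$ and $ST'$ is closed under arbitrary unions. Once this is established, the complete-lattice structure is immediate from order theory: the union of any subfamily is its least upper bound in $\mathcal{L}$, so every subset of $\mathcal{L}$ has a join, and a partial order in which every subset has a join is a complete lattice (here meets are computed as joins of lower bounds, \emph{not} as set-theoretic intersections, since the intersection of two bisimulations need not be one). In particular the empty join yields the empty relation as bottom element, which is vacuously a bisimulation for $(ST,ST')$, and the full join yields the greatest such bisimulation as top.

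The substance of claim~(1) is therefore to verify that $R = \bigcup_{j} R_j$ satisfies Def.~\ref{def:bisim} for $(ST,ST')$ whenever each $R_j$ does. Fix $(h,h') \in R$, say $(h,h') \in R_{j_0}$. Condition~1 depends only on $(h,h')$ and is inherited from $R_{j_0}$; Condition~2 mentions the relation only positively in its conclusion, so the witness $k$ produced by $R_{j_0}$ works for $R$. The only delicate point is Condition~3, where the relation occurs both as a hypothesis ($k \altsim{A} k'$) and in the conclusion ($l \altsim{A} l'$). Here I would use the following observation: if $k \in C_A(h)$, $k' \in C_A'(h')$ and $k\,R\,k'$, then $k\,R_{j_1}\,k'$ for some $j_1$, and I can trigger Condition~3 of $R_{j_1}$ at the \emph{self-pair} $(k,k')$, which is legitimate since $k \in C_A(k)$ and $k' \in C_A'(k')$ by reflexivity of the common-knowledge neighbourhoods, and $k\,R_{j_1}\,k'$. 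Because $k \in C_A(h)$ forces $C_A(k)=C_A(h)$ (and likewise $C_A'(k')=C_A'(h')$), the strategy $\sigma_A$ ranges over the same set $PStr_A(C_A(k))=PStr_A(C_A(h))$ and, by the independence property~(\ref{prop:ST-independence}), the same simulator value $ST(\sigma_A)$ is used. Condition~3 of $R_{j_1}$ then furnishes $l \in succ(k,\sigma_A)$ with $l\,R_{j_1}\,l'$, hence $l\,R\,l'$, as required. The symmetric argument applied to the converses $R_j^{-1}$ (using $ST'$) shows that $R^{-1}=\bigcup_j R_j^{-1}$ is a simulation, so $R \in \mathcal{L}$.

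For claim~(2), I would reduce to Theorem~\ref{game}. By hypothesis $\G$ and $\G'$ are bisimilar, i.e.\ there is an $A$-bisimulation $\altbisim{A}$ with $s_0 \altbisim{A} s_0'$ for the initial states, so the intended initial pair of $\B(\G,\G',Tot)$ is $(s_0,s_0')$. Theorem~\ref{game} then yields a \duplic{} strategy $\sigma_D$ winning $\B(\G,\G',\altbisim{A},s_0,s_0')$, and its proof shows that $\sigma_D$ in fact keeps every compatible play away from all positions that are winning for \spoiler{} under the relation $\altbisim{A}$. The final step is the monotonicity of \duplic's safety objective in the relation parameter: for any $R \subseteq R'$, the games $\B(\G,\G',R,s_0,s_0')$ and $\B(\G,\G',R',s_0,s_0')$ have the identical position tree $T(\B)$ and identical moves, differing only in which positions are declared winning for \spoiler{}, via the clauses $(h,h')\notin R$ and $(k,k')\notin R$ of Def.~\ref{def:3bisim-games}. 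Since $(h,h')\notin R'$ implies $(h,h')\notin R$, the set of \spoiler-winning positions for $R'$ is contained in that for $R$. Instantiating $R = \altbisim{A}$ and $R' = Tot$ (so $R \subseteq R'$ trivially), $\sigma_D$ avoids all $\altbisim{A}$-winning positions and hence, a fortiori, all $Tot$-winning positions; thus $\sigma_D$ is winning for \duplic{} in $\B(\G,\G',Tot,s_0,s_0')$.

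The main obstacle is the treatment of Condition~3 in claim~(1): because the relation appears both positively and negatively there, closure under unions is not a routine monotonicity argument and genuinely relies on the self-pair reduction together with the simulator-independence property~(\ref{prop:ST-independence}). Claim~(2), by contrast, is essentially a bookkeeping consequence of Theorem~\ref{game} once the $R$-monotonicity of the safety objective is made explicit.
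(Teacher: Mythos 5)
Your proposal is correct. For claim (2) you follow exactly the route the paper takes: obtain a winning \duplic{} strategy for $\B(\G,\G',\altbisim{A},s_0,s_0')$ from the forward direction of Theorem~\ref{game}, then observe that enlarging the relation parameter only shrinks the set of \spoiler{}-winning positions while leaving the position tree and the moves untouched, so the same strategy wins $\B(\G,\G',Tot,s_0,s_0')$; this monotonicity observation is precisely the paper's one-sentence justification (``if \duplic{} wins $\B(R)$ for some $R$, then \duplic{} wins $\B(Tot)$''). For claim (1) the paper offers no argument at all, so your union-closure proof is genuinely additional content, and it is sound: conditions 1 and 2 of Def.~\ref{def:bisim} mention the relation only positively and pass to unions trivially, while your treatment of condition 3 --- re-basing the condition at the self-pair $(k,k')$, which is legitimate because $\sim^C_A$ is an equivalence relation so $C_A(k)=C_A(h)$ and $C_A'(k')=C_A'(h')$, and then invoking property~(\ref{prop:ST-independence}) to conclude that the same simulator value $ST(\sigma_A)$ is in play --- is exactly what is needed to neutralize the negative occurrence of the relation in the hypothesis $k\altsim{A}k'$. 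Your side remarks are also warranted: the empty relation is vacuously a bisimulation (giving the bottom element), and meets must be taken as joins of lower bounds rather than set-theoretic intersections, since the existential witnesses in condition 2 may differ across bisimulations, so intersections need not be bisimulations.
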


The second claim follows by observing that if the \duplic{} coalition
has a strategy to win a bisimulation game $\B(R) =
\B(\G,\G',h_0,h_0',R)$ for some $R$, then they also have a
strategy to win the bisimulation game $\B(Tot) =
\B(\G,\G',h_0,h_0',Tot)$, and the construction in Theorem \ref{game}
can be used to show that the bisimulation associated with $\B(R)$ is
included in the bisimulation associated with $\B(Tot)$. Hence, this
latter is maximal w.r.t.~all bisimulations that share the strategy
simulator constructed as in Theorem \ref{game}.

\section{The Hennessy-Milner Property} \label{HMproperty}

\begin{figure}[t]\centering
	\begin{tabular}{ccc}
		$\G_1$ & $\G_2$ \\[5pt]
		\def\stpd{3.5}
		\begin{tikzpicture}[>=latex, transform shape, scale = 0.65]
		\node[initstate] (m0) at (-1.5, -0.5*\stpd) {$q_1$};
		\node[state] (m1) at (0.0, -0.5*\stpd) {$q_2$};
		\node[state] (m2) at (1.5, -0.5*\stpd) {$q_3$};
		\node[state] (m3) at (3.0, -0.5*\stpd) {$q_4$};
		
		\node[state, label=-90:{$p$}] (fin) at (1, -1.5*\stpd) {$q_\top$};
		\node[state, label=-90:{}] (sink) at (-1.5, -1.5*\stpd) {$q_\bot$};
		
		\path [-,style=dotted,shorten >=1pt, auto, node distance=7cm, semithick]
		
		(m0) edge node {1} (m1)
		(m1) edge node {2} (m2)
		(m2) edge node {1} (m3)
		;
		
		\path [->,style=solid,shorten >=1pt, auto, node distance=7cm, semithick]
		
		(m0) edge [bend right] node[left,text width = 0.8cm] {$(a,x)$ $(b,y)$} (fin)
		(m1) edge [bend right] node[midway,right,text width = 0.8cm] {$(a,x)$ $(b,y)$} (fin)
		(m2) edge [bend left] node[midway,left,text width = 0.8cm] {$(a,x)$ $(b,y)$} (fin)
		(m3) edge [bend left] node[right,text width = 0.8cm] {$(a,x)$ $(b,y)$} (fin)
		
		(sink) edge [loop left] (sink)
		
		;
		\end{tikzpicture}
		&
		\begin{tikzpicture}[>=latex, transform shape, scale = 0.65]
		\def\stpd{3.5}
		\node[initstate] (m0) at (-1.5, -0.5*\stpd) {$q'_1$};
		\node[state] (m1) at (0.0, -0.5*\stpd) {$q'_2$};
		\node[state] (m2) at (1.5, -0.5*\stpd) {$q'_3$};
		\node[state] (m3) at (3.0, -0.5*\stpd) {$q'_4$};
		
		\node[state, label=-90:{$p$}] (fin) at (1, -1.5*\stpd) {$q'_\top$};
		\node[state, label=-90:{}] (sink) at (-1.5, -1.5*\stpd) {$q'_\bot$};
		
		\path [-,style=dotted,shorten >=1pt, auto, node distance=7cm, semithick]
		
		(m0) edge node {1} (m1)
		(m1) edge node {2} (m2)
		(m2) edge node {1} (m3)
		;
		
		\path [->,style=solid,shorten >=1pt, auto, node distance=7cm, semithick]
		
		(m0) edge [bend right] node[left] {$(a,x)$} (fin)
		(m1) edge [bend right] node[midway,right,text width = 0.8cm] {$(a,x)$ $(b,y)$} (fin)
		(m2) edge [bend left] node[midway,left,text width = 0.8cm] {$(a,x)$ $(b,y)$} (fin)
		(m3) edge [bend left] node[right] {$(b,y)$} (fin)
		
		(sink) edge [loop left] (sink)
		
		;
		
		\end{tikzpicture}
		\vspace*{-10pt}
	\end{tabular}
	\caption{Counterexample for the Hennessy-Milner property for the subjective and objective semantics.
		\vspace*{-10pt}}\label{hmp}
\end{figure}
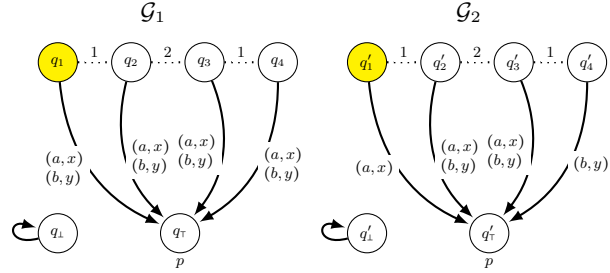

We now show that the notion of bisimulation introduced in
Sec.~\ref{altbisim} enjoys the Hennessy-Milner property. To this end,
we need to define $A$-equivalence between iCGS.
Specifically, given iCGS $\G$ and $\G'$ with histories $h_0 \in
Hist(\G)$ and $h_0' \in Hist(\G')$, we say that the pointed iCGS
$\G(h_0)$ and $\G'(h_0')$, having $h_0$ and $h'_0$ as respective
initial histories, are {\em $A$-equivalent} iff for every $A$-formula
$\phi$, $(\G, h_0) \models_x \phi$ iff $(\G', h_0') \models_x \phi$.
\begin{theorem}\label{HM-theorem}
	The notion of bisimulation in Def.~\ref{def:bisim} enjoys the
	Hennessy-Milner property, that is, the pointed iCGS $\G(h_0)$ and
	$\G'(h_0')$ are $A$-equivalent for the common knowledge semantics if and only if they are $A$-bisimilar.
\end{theorem}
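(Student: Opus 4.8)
The plan is to prove the two implications separately, with the forward direction being a direct consequence of the preservation theorem and the backward direction carrying essentially all of the work. For the \emph{if} direction (bisimilar $\Rightarrow$ $A$-equivalent), note that $A$-bisimilarity of the pointed structures $\G(h_0)$ and $\G'(h_0')$ is exactly $h_0 \altbisim{A} h_0'$. Specializing Theorem~\ref{thm-bisim-mem} to $x = ck$ then yields $(\G,h_0) \models_{ck} \phi$ iff $(\G',h_0') \models_{ck} \phi$ for every $A$-formula $\phi$, so this direction requires no new argument.

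For the \emph{only if} direction (equivalent $\Rightarrow$ bisimilar) I would argue by contraposition: assuming $\G(h_0)$ and $\G'(h_0')$ are \emph{not} $A$-bisimilar, I would exhibit a distinguishing $A$-formula. First I would translate non-bisimilarity into the game world. Proposition~\ref{prop:tot}(2) gives that bisimilarity implies the \duplic{} coalition wins $\B(\G,\G',Tot,h_0,h_0')$, while the converse part of Theorem~\ref{game} (instantiated with $R = Tot$) gives that a \duplic{} winning strategy yields an $A$-bisimulation; together these show non-bisimilarity is equivalent to \duplic{} having \emph{no} winning strategy in $\B(\G,\G',Tot,h_0,h_0')$. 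I would then invoke the Gale--Stewart-type determinacy theorem for bisimulation games (proved in this section) to conclude that the \spoiler{} coalition has a winning strategy $\tau$. The crucial feature supplied by determinacy is that $\tau$ can be taken so that \ispoil's choices depend only on \ispoil's observations, namely the pair of common-knowledge neighbourhoods and the partial strategies revealed so far; this uniformity is what will let the extracted strategy legitimately witness a $\dlangle A \drangle$ modality.

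The heart of the argument is then to extract a distinguishing formula from $\tau$ by induction on the attractor rank at which $\tau$ forces a \spoiler{}-winning position. For finite iCGS each set $PStr(C_A(\cdot))$ and each $succ(\cdot,\cdot)$ is finite, so the game tree is finitely branching and a reachability win is reached at a bounded finite rank, making the induction well-founded and the resulting formula finite (in the infinite-state case one would need image-finiteness as a hypothesis). In the base case a winning position with $\pi(k) \neq \pi'(k')$ furnishes a distinguishing atom directly. In the inductive step, when $\tau$ directs \ispoil{} to play the $L$-branch committing to a uniform $\sigma_A \in PStr(C_A(h))$, this strategy is exactly the witness for a modality $\dlangle A \drangle$; the subsequent \pspoil{} choice of a neighbourhood representative $k' \in C_A'(h')$ and its \pdupl{} match $k$, followed by the successors $l',l$, translate into the path body of the modality, combining a next-step $X$ for the transition with yesterday-operators $\mathbf{Y}$ that pick out, inside the common-knowledge neighbourhood, the specific history \pspoil{} singled out through its objectively distinct past. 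Taking the finite disjunction/conjunction over \ispoil's options and letting the universal ``for all $ck$-outcomes'' clause of the $\dlangle A \drangle$ semantics absorb \duplic's responses yields an $A$-formula $\phi$ with $(\G,h_0) \models_{ck} \phi$ and $(\G',h_0') \not\models_{ck} \phi$ (the $R$-branch giving the symmetric formula), contradicting $A$-equivalence.

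The hard part will be this inductive construction, specifically reconciling the $ck$-semantics — which quantifies the outcomes of $\sigma_A$ over the \emph{entire} common-knowledge neighbourhood $C_A(h)$ — with the game, where the perfectly-informed players pin down individual histories \emph{inside} a neighbourhood. Since all histories of a common-knowledge neighbourhood are by definition $\sim^C_A$-indistinguishable, no $\dlangle A \drangle$-quantification alone can isolate one of them; the resolution, and the reason the theorem is restricted to ATL with $\mathbf{Y}$, is to use the yesterday modality to refer back to the objectively different past states of the chosen history. The second delicate point is matching the uniformity of \ispoil's winning choices (guaranteed by determinacy) against the uniformity demanded of the strategy witnessing $\dlangle A \drangle$, so that the strategy read off from $\tau$ is genuinely a uniform strategy of the iCGS as required by Definition~\ref{def:bisim}.
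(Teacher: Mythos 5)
Your proposal follows essentially the same route as the paper's proof: the easy direction via Theorem~\ref{thm-bisim-mem}, and for the converse, non-bisimilarity converted through Proposition~\ref{prop:tot} and Theorem~\ref{game} into the absence of a \duplic{} winning strategy, determinacy (Theorem~\ref{GS-theorem}) yielding a \spoiler{} winning strategy whose uniform \ispoil{} choice supplies the witness for $\dlangle A \drangle$, and the $\mathbf{Y}$ modality bridging the gap between ck-outcomes ranging over a whole common-knowledge neighbourhood and the individual histories pinned down by the P-players. The only divergence is bookkeeping in the final induction: you nest formulas directly by attractor rank, whereas the paper invokes K\"onig's Lemma to obtain a finite frontier of \spoiler{}-winning positions and then iterates by relabeling the positions in $Obs(p)$ with fresh atoms $p_\varphi$, which sidesteps the formula-nesting complications your sketch would have to work through.
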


Before proving Theorem \ref{HM-theorem}, as a counterexample for the subjective and objective semantics, 
we recall the example used in \cite{BCCJK20} depicted in Figure \ref{hmp}. 
In each state agent $1$ can execute actions $\{a, b, c\}$ while agent $2$ can execute $\{x, y, z\}$. 
The transitions shown lead to $q_\top$ and $q'_\top$, while the omitted transitions lead to $q_\bot$ and $q'_\bot$, respectively. 
%
We can check that states $q_i$ and $q'_j$, with $i,j\in\{1,2,3,4\}$, are $\{1,2\}$-equivalent, 
and it holds the same also for states $q_\bot$ (resp., $q_\top$) and $q'_\bot$ (resp., $q'_\top$).  
Therefore, $\G_1$ and $\G_2$ are $\{1,2\}$-equivalent, i.e., they satisfy the same $\{1,2\}$-formulas in ATL$^*$. 
However, there is no $\{1,2\}$-bisimulation between the two iCGS. In particular, for any $i, j \in \{1,2,3,4\}$, 
state $q_i$ cannot be $\{1,2\}$-bisimilar with any state $q'_j$. 

\cata{
	The lack of a bisimulation between the two structures in Fig.~\ref{hmp} follows
	by observing that the \spoiler{} coalition wins the appropriate bisimulation game,
	since in the initial position $(q_1,q_1')$, \ispoil{} may choose strategy $\sigma$ which 
	produces action tuple $(a,x)$ in each state in $C_{\{1,2\}}(q_1) = \{q_1,q_2,q_3,q_4\}$.
	If \idupl{} responds with strategy $\sigma'(q_1') = \sigma'(q_2')=\sigma'(q_3') =\sigma'(q_4)= (a,x)$,
	then \pspoil{} will choose state $q_4'$, 
	and \pdupl{} has no good choice of some state in $\G_1$ whose successor by $\sigma$ is labeled $\neg p$, which 
	is the label of the successor of $q_4'$ by $\sigma'$.
	Similar situations occur for all the other choices by \idupl{}.
}

To prove Theorem~\ref{HM-theorem}, we actually prove the following
Gale-Stewart-type theorem for the bisimulation games introduced in
Def.~\ref{def:3bisim-games}:
\begin{theorem}[Gale-Stewart theorem for bisimulation games] \label{GS-theorem}
	Each bisimulation game is \textbf{determined}: either the \duplic{} coalition or the \spoiler{} coalition wins the game.
\end{theorem}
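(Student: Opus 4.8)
The plan is to prove determinacy by showing the single non-trivial implication: if the \duplic{} coalition has no winning strategy in $\B(\G,\G',R,h_0,h_0')$, then the \spoiler{} coalition does. That the two coalitions cannot both win is immediate, as their objectives are complementary on the tree $T(\B)$ — \spoiler{} must reach a position declared winning by Rules~1, 9 or 10, and \duplic{} must avoid all such positions forever. The whole difficulty is that $T(\B)$ carries imperfect information for \ispoil{} and \idupl{}, so the quantifier swap ``\emph{for every \duplic{} strategy some \spoiler{} strategy refutes it}'' $\Rightarrow$ ``\emph{some \spoiler{} strategy refutes every \duplic{} strategy}'' is not granted by any general determinacy theorem and must be earned from the specific structure: both I-players observe exactly the common-knowledge neighbourhood pair $C_A(h)\times C_A'(h')$ (plus the committed partial strategy), both P-players are perfectly informed, and the move set $PStr_A(C_A(h))$ available to \ispoil{} at $(h,h')$ is itself a function of the observation.

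First I would split each round into an \emph{I-phase} (Rules~2--4), where \ispoil{} commits a partial strategy over one neighbourhood and \idupl{} answers with one over the related neighbourhood, and a \emph{P-phase} (Rules~5--12), where only the perfectly-informed P-players move, choosing the witnesses $k,k'$ and successors $l',l$ that skolemize the functional forms (\^2) and (\^3) of conditions (2) and (3) in Def.~\ref{def:bisim}. As $Act$ and each neighbourhood are finite, every P-phase is a \emph{finite perfect-information} reachability sub-game, hence Gale--Stewart-determined; for each committed pair $(\sigma_A,\sigma_A')$ I would precompute its value, namely the set of successor positions $(l,l')$ that \pspoil{} can force, together with whether some forced successor is already \spoiler{}-winning. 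This reduces the whole game to an infinite one whose positions are the labelled pairs $(h,h')$ and whose single abstract move is the I-phase choice, refereed by the precomputed P-phase values.

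Next I would pass to the \emph{belief-state} (observation-quotient) arena for this residual game. A position there is an information set of the I-players — a set of concrete pairs sharing the same observation history — and, because \ispoil{} and \idupl{} receive identical observations, the I-phase becomes a \emph{perfect-information} game on belief states. On this arena I would define \spoiler{}'s reachability attractor as the least fixpoint of the controllable-predecessor operator, declaring a belief state \spoiler{}-winning when it contains a concrete pair that the perfectly-informed \pspoil{} can actually drive to a position bad under Rule~1/9/10. Since \ispoil{}'s winning move is then read off the attractor at the level of belief states, it is by construction a function of $C_A(h)\times C_A'(h')$ alone, i.e.\ uniform, while \pdupl{}- and \pspoil{}-choices resolve the belief to the concrete play; \spoiler{}'s extracted strategy is thus well-defined and winning whenever the initial belief lies in the attractor.

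The main obstacle will be the \emph{transfer lemma}: that the initial belief state lies in this attractor \emph{iff} \duplic{} has no winning strategy in the original game. One direction extracts \spoiler{}'s uniform strategy from the attractor ranks, as above; the other is the contrapositive of Theorem~\ref{game}, turning attractor-avoidance by \duplic{} into an $A$-bisimulation contained in $R$. The genuine friction is that the base predicate $(h,h')\notin R$ of Rule~1 is \emph{not} observation-invariant — two concrete pairs in one belief state may disagree on membership in $R$ — so a belief-uniform \ispoil{} move cannot itself encode the concrete refutation. The resolution I would pursue is to keep every \ispoil{} commitment at the neighbourhood level and relocate all concrete case distinctions into \pspoil{}'s perfectly-informed moves: \ispoil{} proposes a $\sigma_A\in PStr_A(C_A(h))$ that refutes (\^3) \emph{somewhere} in the neighbourhood, and \pspoil{} then selects the concrete $k,k',l'$ that exhibit the refutation. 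Proving that this relocation is sound — precisely the $\forall_{\duplic}\exists_{\spoiler}\rightarrow\exists_{\spoiler}\forall_{\duplic}$ uniformization — is exactly where the hypothesis that the two I-players share the same observation is indispensable, and it is the step I expect to carry the technical weight of the theorem.
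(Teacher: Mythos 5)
There is a genuine gap, and you name it yourself: your last paragraph defers ``the $\forall\exists\rightarrow\exists\forall$ uniformization'' as ``the step I expect to carry the technical weight of the theorem.'' That step \emph{is} the theorem; everything else in your plan (splitting rounds into an I-phase and a P-phase, backward induction on the finite-horizon P-phase, reduction of the residual game to the I-players' observations) is routine scaffolding, and the paper's proof consists almost entirely of discharging the step you leave open. Concretely, the paper proceeds Gale--Stewart-style with \emph{defensive} strategies defined relative to observation classes $Obs(p)$: assuming \spoiler{} has no defensive strategy and \duplic{} no winning strategy from $Obs(p)$, one obtains Formula~\ref{formula-defensive}, and the decisive observation is that the universal quantifiers that follow the existential witnesses --- over $k' \in C_A'(lab_2(p_1))$ and over $l' \in succ(k',\sigma'_{\sigma,p_1})$ --- range over sets determined by the observation alone (the CKN pair and the committed partial strategies), hence identical for every $p_1 \in Obs(p)$. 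Therefore the witnesses $\sigma'_{\sigma,p}$, $k_{\sigma,p,k'}$, $l_{\sigma,p,k',l'}$ extracted for one representative $p$ are already ``defensive'' at \emph{all} positions of $Obs(p)$, so \idupl{}'s reply can be taken constant on the class (i.e.\ uniform) while \pdupl{} supplies the concrete witnesses; this produces a \duplic{} winning strategy from $Obs(p)$, contradicting the assumption, and negating Formula~\ref{formula-defensive} then yields the uniform defensive action for \ispoil{}. No argument of this kind appears in your proposal, so the claimed strategy extraction is unsupported.

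Moreover, the intermediate machinery you do set up is undermined by the very friction you identify. Declaring a belief state \spoiler{}-winning ``when it contains a concrete pair that \pspoil{} can drive to a bad position'' makes the base case of your attractor existential over the information set, but the play occupies a single concrete pair and \spoiler{} wins only if \emph{that} pair violates $R$ or the labelling; since membership in $R$ is not observation-invariant, this attractor overapproximates \spoiler{}'s winning region, the extraction of a winning \spoiler{} strategy ``from the attractor ranks'' fails exactly at the base case, and your transfer lemma is false as stated for this operator. The paper sidesteps belief-state quotients altogether: it never needs an observation-level winning predicate, because defensiveness is a property quantified over all concrete positions in $Obs(p)$, and uniformity is recovered from the quantifier structure of Formula~\ref{formula-defensive} rather than from a quotient game.
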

\begin{proof}
	We follow the pattern of Gale-Stewart games by proving first that, at
	positions where one coalition does not have a winning strategy, the
	other one has a ``defensive'' strategy, and then showing that any
	defensive strategy is winning.
	
	
	Formally, a \textbf{defensive} strategy for the \spoiler{} coalition
	in the bisimulation game is a joint strategy $\sigma_S =
	(\sigma_{IS},\sigma_{PS})$ such that, for any position $p$ of the game
	which is compatible with $\sigma_D$, \duplic{} does not have a winning
	strategy \emph{starting from the set of positions that have the same
		observability as $p$ and are compatible with $\sigma_S$}.  Defensive
	strategies for \duplic{} are defined similarly.
	
	We then have:
	\begin{lemma}
		If \duplic{} coalition does not have a winning strategy,
		then \spoiler{} coalition has a defensive strategy.
	\end{lemma}
	The proof of this claim works similarly to the classical
	case \cite{GaleS53}, by building the defensive strategy by induction
	on the level of the position in the tree of positions of the
	bisimulation game, such that, at each position belonging to \pspoil{}
	or \ispoil, we identify one ``defensive'' action for these agent.  The
	difficulty is
	to build a uniform strategy for \ispoil,
	i.e., at two positions with identical observations
	for \ispoil, her actions are identical.
	
	The argument is similar for the base case and the inductive cases, and
	starts by assuming the following: for some position $p$ with $lab(p) =
	(\br h,\br h')$,
	the \spoiler{} coalition does not have a defensive strategy from
	$Obs(p)$, but neither \duplic{} has a winning strategy from $Obs(p)$.
	Therefore the following property, which formalizes the lack of a defensive strategy w.r.t. L-transitions 
	(i.e. steps 2.a, 3, 5, 7, 9 and 11) in the bisimulation game, holds:
	\begin{multline}
		\forall p_1 \in Obs(p) \ 
		\forall \sigma \in PStr(C_A(lab(p_1))) \\ 
		\exists \sigma'_{\sigma,p_1} \in PStr(C_A'(lab_2(p_1))) \ \forall k' \in C_A'(lab_2(p_1)) \\ 
		\exists k_{\sigma,p_1,k'}, \in C_A(lab_1(p_1)) \ \forall l' \in succ(k',\sigma'_{\sigma,p_1}) \\ 
		\exists l_{\sigma,p_1,k',l'} \in succ(k_{\sigma,p_1,k'},\sigma) \ \exists \sigma_D \!=\! (\sigma_{ID},\sigma_{PD}) \!\in\! \Sigma_{Dupl} \\
		\text{ with $\sigma_D$ winning from position } (l_{\sigma,p_1,k',l'},l')
		\label{formula-defensive}
	\end{multline}
	Note that a similar property holds w.r.t.~R-transitions.
	
	Notice that, by default, nothing excludes having some $p_1,p_2 \in
	Obs(p)$ such that, for some, $\sigma \in PStr(C_A(lab_1(p)))$,
	$\sigma'_{\sigma,p_1} \neq \sigma'_{\sigma,p_2}$.  But \idupl{} can choose the same
	$\sigma'_{\sigma,p}$ and \pdupl{} can then choose
	$l_{\sigma,p,k',l'}$ at \textbf{all positions} $p_1\in Obs(p)$,
	because choices of $k'$ for \pdupl{} 
	in Formula \ref{formula-defensive}
	are quantified over the whole $C_A'(lab_2(p_1))=C_A'(lab_2(p))$.
	In other words, because \idupl{}'s choice of $\sigma'_{\sigma,p}$ combined with 
	\pdupl{} choice of $l_{\sigma,p,k',l'}$
	are "defensive" at position $p$, they are "defensive" at any other position $p_1 \in Obs(p)$.
	This way, \idupl{}'s choice can be made \textbf{uniform} w.r.t. her observations, 
	which gives a winning strategy for the \duplic{} coalition at $p$, fact which
	contradicts the initial assumption.
	
	
	
	Formally, from any position $p_1 \in Obs(p)$, the following strategy for \duplic{} coalition is winning:
	\begin{itemize}
		\item Denote $p^1_\sigma$ the successor of $p_1$ after \ispoil{} chooses $\sigma$. 
		Then \idupl{} chooses $\sigma'_{\sigma,p}$ at $p^1_\sigma$. 
		\item Denote the resulting position $p^1_{\sigma,\sigma'_{\sigma,p}}$.
		Denote further by $p^1_{\sigma,\sigma'_{\sigma,p},k'}$ the successor of $p^1_{\sigma,\sigma'_{\sigma,p}}$ after 
		\pspoil{} has chosen $k'\!\in\! C_A'(lab_2(p)\!)$. 
		Then \pdupl{} chooses $k_{\sigma,p,k'} \!\in C_A(lab_1(p))$ at $p^1_{\sigma,\sigma'_{\sigma,p},k'}$. 
		\item Denote the resulting position $p_{\sigma,\sigma'_{\sigma,p},k_{\sigma,p,k'}}$.
		Also denote $p^1_{\sigma,\sigma'_{\sigma,p},k',k_{\sigma,p,k'},l'}$ the successor of 
		$p^1_{\sigma,\sigma'_{\sigma,p},k_{\sigma,p,k'}}$ 
		after \pspoil{} chooses 
		$l' \!\in\! succ(k'\!,\sigma'_{\sigma,p})$. 
		Then \pdupl{} chooses  $l_{\sigma,p,k',l'} \!\in\! succ(k_{\sigma,p,k'},\sigma)$ at 
		$p^1_{\sigma,\sigma'_{\sigma,p},k',k_{\sigma,p,k'},l'}$.
	\end{itemize}
	
	Formula \ref{formula-defensive} implies that from the resulting
	position, which is labeled $(l_{\sigma,p,k',l'},\allowbreak l')$, the
	\duplic{} coalition has a winning strategy. Hence, we have a winning strategy for the \duplic{} coalition from $Obs(p)$,
	which contradicts the initial assumption.
	
	As a result, the \spoiler{} coalition must have a defensive strategy
	from $Obs(p)$, which can be built by negating
	Formula \ref{formula-defensive}, after
	which the construction continues by induction on the observation class
	of the resulting positions labeled $(l_{\sigma,p,k',l'},l')$. A
	similar argument shows that, when \spoiler{} coalition does not have a
	winning strategy, \duplic{} coalition has a defensive strategy.
	
	It remains to show that a defensive strategy for \duplic{} is winning.
	This follows by observing that any infinite path in $T(\B)$ which is compatible with a defensive strategy
	for \duplic{} must not pass through a position which is winning
	for \spoiler{}, hence is an infinite path which is winning
	for \duplic{}, which ends the proof.
\end{proof}

We can now proceed with the proof of Theorem \ref{HM-theorem}.
\begin{proof}{\bf Theorem \ref{HM-theorem}}
	Assume that there exists no bisimulation between $\G$ and $\G'$ 
	which, by Proposition \ref{prop:tot}, means that
	in  the bisimulation
	game $\B(\G,\G',s_0,s_0',Tot)$ 
	the \duplic{} coalition has no winning
	strategy.  By the determinacy theorem, the \spoiler{} coalition has a
	winning strategy $\sigma_S = (\sigma_{IS},\sigma_{PS})$.  Since each
	position in the bisimulation game has a finite number of successors,
	as a consequence of K\"onig's Lemma, there exists a finite set
	$P_{\sigma_S}$ of winning positions for \spoiler{} such that all runs
	compatible with $\sigma_S$ pass through one position of
	$P_{\sigma_S}$.
	
	Pick then a position $p$ labeled $(h,h')$ such that, on all runs
	starting from $p$ and compatible with $\sigma_S$, the first position
	labeled with some $(l,l')$ occurring on the run after $p$ is a winning
	position for \spoiler.  Note that the following
	property, formalizing the fact that $\sigma_S$ is winning, holds:
	\begin{multline}\!\!\!\!\!\!
		\exists \sigma\! \in\! PStr(C_A(h)\!)\:
		\forall \sigma' \!\!\!\in \! PStr(C_A'(h')\!) \:
		\exists k'_{\sigma'} \!\in\! C_A'(h') \: \forall k \!\in\! C_A(h)\\ 
		\big( \pi(k) = \pi'(k'_{\sigma'}) \rightarrow \exists l'_{\sigma',k'_{\sigma'},k}\! \in\! succ(k',\sigma') \\
		\forall l \!\in \!succ(k,\sigma) \ \pi(l) \!\neq\! \pi(l'_{\sigma',k'_{\sigma'},k})\big)
		\label{formula-xatl}
	\end{multline}
	where $\sigma \!=\! \sigma_{IS}(p)$, $k'_{\sigma'} \!=\! \sigma_{PS}(p_1)$,
	$p_1$ is the successor of $p$ after \ispoil{} chooses $\sigma$
	and \pdupl{} answers with $\sigma'$, and $l'_{\sigma',k'_{\sigma'},k} \!
	= \! \sigma_{PS}(p_2)$ where $p_2$ is the successor of $p_1$
	after \pspoil{} chooses $k'$ and \pdupl{} answers with $k$.  Note that
	the implication with premise $\pi(k) = \pi'(k'_{\sigma'})$ is
	needed since \pdupl{}'s choices with $\pi(k) \!\neq\! \pi'(k'_{\sigma'})$
	are immediately winning for \spoiler{}, and then there is no need to
	proceed with steps 11-12 corresponding with the successors of $k$ and
	$k'$.
	
	So, if we define the formula 
	\begin{equation}\label{formula-atl-bisim}
		\varphi(P_{\sigma_S}) \!=\! \dlangle A \drangle X \Big(\!\!\!\!\!\!\!\!\bigwedge_{\sigma'\in PStr(C_A'(h'))}\!\!\!\!\!\!\!\!\!\!\!\!\!\big( \textbf{Y} \pi'(k'_{\sigma'}) 
		\rightarrow \!\!\!\!\!\bigvee_{k\in C_A(h)} \!\!\!\!\!\!
		\neg \pi'(l'_{\sigma',k'_{\sigma'},k})\big)\Big)
	\end{equation}
	then Formula \ref{formula-xatl} implies that $(\G, h) \models_{ck} \varphi(P_{\sigma_S})$ but, on the other hand, $(\G', h') \not \models_{ck} \varphi(P_{\sigma_S})$.
	
	\new{
		To see this, note that, in 
		$\varphi(P_{\sigma_S})$ 
		the coalition operator $\dlangle A \drangle$ encodes
		$\exists \sigma \in PStr(C_A(h))$ in \ref{formula-xatl}.  Further, the
		conjunction indexed by $\sigma'$ in \ref{formula-atl-bisim}
		corresponds to the universal quantifier on $\sigma'$
		in \ref{formula-xatl}.  The $k'_{\sigma'}$ in \ref{formula-atl-bisim}
		represents
		the skolemization of 
		the existential quantification over $k'_{\sigma'}$  in \ref{formula-xatl}. 
		The last disjunction in \ref{formula-atl-bisim} corresponds with the existential quantification over $k$ in \ref{formula-xatl}. 
		The existential quantification over $l'_{\sigma',k'_{\sigma'},k}$ in \ref{formula-xatl} is encoded in \ref{formula-atl-bisim} 
		by its skolemization, denoted $l'_{\sigma',k'_{\sigma'},k}$ too. 
		Finally, 
		the universal quantifier over $l \in succ(k,\sigma) $ in \ref{formula-xatl} and the last 
		property connecting $l$ to $l'_{\sigma',k'_{\sigma'},k}$
		is encoded in \ref{formula-atl-bisim} by $\neg \pi'(l'_{\sigma',k'_{\sigma'},k})$.
	}
	
	\new{
		The yesterday operator $Y$ is needed because
		we must encode the part of \ref{formula-xatl} referring to $\pi(k)$, which refers to the current position. 
		Unfortunately, a formula like $\dlangle A \drangle (\pi(k) \to X \psi)$ which would  simulate 
		more easily the implication $\pi(k) = \pi'(k'_{\sigma'}) \rightarrow  \exists l'_{\sigma',k'_{\sigma'},k}\! \in\! succ(k',\sigma') \ldots $ from \ref{formula-xatl}
		would not be  ATL but rather ATL$^*$.
		But, in order to correctly simulate  quantifier order from \ref{formula-xatl}, in \ref{formula-atl-bisim} $\pi(k)$ 
		must lie within the scope of $\dlangle A \drangle X$, which refers to the positions one time step after the current position. 
		Hence, in the scope of $\dlangle A \drangle X$ we need to recover the value of $\pi(k)$ at the previous position, 
		hence we utilize $Y$. We believe $Y$ might not be needed for the full $ATL^*$, a topic for further research. 
	}
	
	
	The proof can then be completed by induction as follows: we modify the
	bisimulation game by appending a new winning condition for \spoiler{}:
	all positions in $Obs(p)$ are labeled as winning, with the formula
	$\varphi$ witnessing this.  The set of atomic propositions for both
	iCGS is augmented with $p_\varphi$ and, for each $(h,h')$ labeling a
	position $p_1 \in Obs(p)$, we augment $\pi(h)$ with $p_\varphi$ while
	$\pi'(h')$ is left unchanged.  This provides us with a new
	bisimulation game in which (the appropriately updated) strategy
	profile $\sigma_S$ is still a winning strategy for \spoiler{},
	there is a strictly smaller set of positions $P'_{\sigma_S}$ which are
	winning for \spoiler{}, and all runs compatible with $\sigma_S$ pass
	through one position of $P_{\sigma_S}'$.
	
	The argument ends when we obtain some $P^m_{\sigma_S}$ for which
	$Obs(P^m_{\sigma_S})$ is a singleton, which means that $(h_0,h_0') \in
	P^m_{\sigma_S}$. Then the formula $\varphi(P^m_{\sigma_S})$ built as
	in Equation~\ref{formula-atl-bisim} is the witness that $(\G,h_0)$ is
	not $A$-equivalent with $(\G',h_0')$.
\end{proof}





\section{Undecidability Result} \label{bisim_undec}

In this section we show that deciding the existence of bisimulations
between iCGS is undecidable in general.  We state immediately the main
result of this section.
\begin{theorem} \label{undec}
	The problem of checking whether two CGS $\G_1$ and $\G_2$ defined on
	the same set $Ag$ of agents are $A$-bisimilar, for some set
	$A\subseteq Ag$ of agents, is undecidable.
\end{theorem}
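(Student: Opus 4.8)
The plan is to reduce from the halting problem for Turing machines, leveraging the fact that the model-checking (equivalently, strategy-existence) problem for ATL with imperfect information and perfect recall is known to be undecidable in this setting, and then to transfer undecidability to bisimilarity via the two results already established: Theorem~\ref{thm-bisim-mem} (bisimilar iCGS are $A$-equivalent) and Theorem~\ref{HM-theorem} (the Hennessy--Milner property for the common-knowledge semantics). First I would fix a Turing machine $M$ and construct an iCGS $\gtm$ together with a two-agent coalition $A$ encoding the computation of $M$ on the empty tape. As is standard for reductions under imperfect information and perfect recall, one agent incrementally ``writes out'' a purported halting computation of $M$ while the other is charged with verifying local consistency of successive configurations; the indistinguishability relations are arranged so that the writer cannot observe the verifier's challenges, forcing genuinely uniform play. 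The structure is designed so that $A$ has a uniform perfect-recall strategy keeping the play out of a designated state $\mathit{err}$---witnessed by an $A$-formula $\phi$ asserting exactly this, e.g.\ $\phi = \dlangle A \drangle \neg(\top\, U\, \mathit{err})$---if and only if $M$ halts: when $M$ halts there is a finite correct computation the writer can always commit to, whereas when $M$ diverges the verifier can always expose a faulty step, driving the play into $\mathit{err}$.

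Next I would build the ``simple'' iCGS $\gs$, on the same agents and atoms, in which the coalition $A$ trivially has an avoiding strategy, so that $(\gs, s_0') \models_{ck} \phi$ holds unconditionally, and---crucially---so that in the halting case the observable and propositional behaviour of $\gs$ matches that of $\gtm$. The reduction is then completed by establishing the equivalence: $\gtm$ and $\gs$ are $A$-bisimilar from their initial states if and only if $M$ halts. For the direction where $M$ diverges I would argue the contrapositive: then $(\gtm, s_0) \not\models_{ck} \phi$ while $(\gs, s_0') \models_{ck} \phi$, so the two pointed structures are not $A$-equivalent, whence by Theorem~\ref{thm-bisim-mem} they cannot be $A$-bisimilar. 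For the converse, where $M$ halts, I would show that $\gtm$ and $\gs$ are $A$-equivalent---every $A$-formula takes the same truth value in both pointed structures under the common-knowledge semantics---and then invoke the Hennessy--Milner Theorem~\ref{HM-theorem} to conclude that they are $A$-bisimilar. Since the halting problem is undecidable, so is the existence of an $A$-bisimulation, which proves Theorem~\ref{undec}.

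The main obstacle lies in the halting direction of the construction. The diverging case is easy, since it is settled by a single distinguishing formula together with the cheap preservation result of Theorem~\ref{thm-bisim-mem}. By contrast, a single matching formula does not suffice for the halting case: there I must guarantee \emph{full} $A$-equivalence of $\gtm$ and $\gs$, so $\gs$ has to be engineered to mirror the entire observable strategic behaviour of $\gtm$ precisely when $M$ halts, only after which Theorem~\ref{HM-theorem} can be applied to deduce bisimilarity. Equally delicate is setting up the imperfect-information pattern in $\gtm$ so that the property ``a uniform strategy for $A$ avoids $\mathit{err}$ iff $M$ halts'' holds specifically for the common-knowledge semantics, which is the semantics for which our Hennessy--Milner theorem is available; the writer/verifier indistinguishability must be chosen so that common-knowledge uniformity of the coalition coincides with the combinatorial constraint that a genuine halting computation exists.
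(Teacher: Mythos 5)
Your reduction skeleton matches the paper's: a Turing-machine iCGS $\gtm$ in the style of Dima--Tiplea, a ``simple'' iCGS $\gs$ in which the coalition always has an avoiding strategy, the easy direction settled by a single distinguishing formula plus the preservation Theorem~\ref{thm-bisim-mem}. The divergence, and the genuine gap, is in the other direction. The paper does \emph{not} invoke the Hennessy--Milner Theorem~\ref{HM-theorem} there: it constructs the bisimulation explicitly. It identifies the ``good'' runs of $\gtm$ (those simulating the tape cells and head frontiers), defines a \emph{functional} relation $\altbisim{1,2}$ sending each history along a good run to a matching history of $\gs$ (ambiguous states to $s^1_{amb}/s^2_{amb}$, nonambiguous ones to $s^1_{namb}/s^2_{namb}$, correct actions to $ok$), and then builds the strategy simulators $ST$, $ST'$ from this functional relation. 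Your plan instead requires proving \emph{full} $A$-equivalence of the two structures and then applying Theorem~\ref{HM-theorem} --- but you give no method for establishing $A$-equivalence, and in this framework the only available tool for proving that two iCGS satisfy all the same $A$-formulas is precisely to exhibit a bisimulation and apply Theorem~\ref{thm-bisim-mem}. Any honest completion of your ``engineer $\gs$ to mirror the observable strategic behaviour'' step would amount to constructing the bisimulation anyway, at which point the detour through the Hennessy--Milner theorem does no work; as written, the entire difficulty of the theorem has been pushed into an unproved assertion.

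There is also a concrete technical error in your construction sketch: with the safety formula $\phi = \dlangle A \drangle \neg(\top\, U\, \mathit{err})$, you cannot realize ``a uniform avoiding strategy exists iff $M$ halts.'' If $M$ diverges and the writer plays out the genuine (infinite) computation, the verifier never has a faulty step to expose, so no error is ever forced --- a safety objective cannot penalize non-termination. This is exactly why the paper's construction has the opposite polarity: the coalition $\{1,2\}$ can avoid $s_{err}$, and the two structures are bisimilar, if and only if $M$ \emph{never} halts. Your version is repairable (the non-halting problem is equally undecidable, so the reduction still yields Theorem~\ref{undec} once the polarity is flipped), but the equivalence as you stated it cannot hold for any iCGS of this shape.
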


The proof of Theorem~\ref{undec} can be outlined as follows: given any
deterministic Turing Machine, by building on \cite{DimaTiplea11} we
construct a 3-agent iCGS which has the property that two agents (call
them $1$ and $2$) have a winning strategy to avoid an error state if
and only if the TM never stops when starting with an empty tape.  We
note that this strategy, when it exists, is unique.  Then, we
construct a second 3-agent ``simple'' iCGS in which there exists a
unique strategy for agents $1$ and $2$ (without memory) for avoiding
an error state.  Finally, we prove that these two iCGS are bisimilar
if and only if the TM never stops when starting with an empty tape,
which is sufficient to derive the undecidability of the former
problem.

We start with the construction of the second iCGS, which is depicted in
Figure~\ref{simplemodel}.  Note that the transitions only represent
the actions of agents $1$ and $2$, agent $3$'s role is to "solve
nondeterminism" in states $s_{init}$, $s_{gen}$ and $s_{tr}$.
First of all, we prove the following lemma.
\begin{lemma} \label{remark-deviations-g-simple}
	In the iCGS depicted in Figure~\ref{simplemodel} there exists a unique
	strategy for agents $1$ and $2$ to avoid state  $s_{err}$.
\end{lemma}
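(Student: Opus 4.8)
The plan is to read off the transition structure of the simple iCGS in Figure~\ref{simplemodel} and to establish the two halves of the statement separately: existence of a safe strategy for the coalition $\{1,2\}$, and its uniqueness on the reachable part of the structure. Throughout, agent $3$ is the opponent who resolves the nondeterminism at $s_{init}$, $s_{gen}$ and $s_{tr}$, so avoidance of $s_{err}$ must hold against \emph{every} completion of the coalition's action by agent $3$. Accordingly, I would first make the avoidance condition precise: call a joint action $\vec a_{\{1,2\}}$ \emph{safe} at a state $s$ when every transition $s \xrightarrow{\vec a} s'$ whose restriction to $\{1,2\}$ equals $\vec a_{\{1,2\}}$ satisfies $s' \neq s_{err}$. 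The key structural fact to extract from the figure is that at each of $s_{init}$, $s_{gen}$, $s_{tr}$ there is \emph{exactly one} safe profile for $\{1,2\}$, whereas at the remaining states (which are either absorbing or have no edge into $s_{err}$) every enabled profile is safe.

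For existence I would define the memoryless strategy $\sigma^\star$ that plays, at each state, its unique safe profile, and prove by induction on $|h|$ that every history $h$ compatible with $\sigma^\star$ visits only safe states and never $s_{err}$. The base case is the initial state $s_{init}$; the inductive step uses that the safe profile blocks all $s_{err}$-edges, while each of its non-$s_{err}$ successors is again one of the states catalogued above. Hence every path in $out_{obj}(s_{init}, \sigma^\star)$ avoids $s_{err}$, which gives existence.

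For uniqueness I would take any uniform strategy $\sigma_{\{1,2\}}$ that avoids $s_{err}$ and argue, by induction on the length of reachable histories, that it agrees with $\sigma^\star$. At a reachable history $h$ with $last(h) \in \{s_{init}, s_{gen}, s_{tr}\}$, if $\sigma_{\{1,2\}}(h)$ differed from the unique safe profile it would be unsafe, so agent $3$ could complete it into a transition to $s_{err}$; the resulting path lies in $out_{obj}(h, \sigma_{\{1,2\}})$, contradicting avoidance. Thus the action at $h$ is forced, and since the forced value depends only on $last(h)$, the restriction of $\sigma_{\{1,2\}}$ to the reachable histories coincides with $\sigma^\star$. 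Since two avoiding strategies can still differ on histories that neither can reach, ``unique'' is understood, as usual, up to behaviour on reachable histories; this is the only sense required by the undecidability argument, and it also justifies the parenthetical remark that the strategy is effectively memoryless.

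The main obstacle I expect is the bookkeeping of agent $3$'s resolutions together with the uniformity requirement. Concretely, I must verify from the figure that at every state where $\{1,2\}$ have a genuine choice, (i) precisely one profile is safe and (ii) every unsafe profile admits an agent-$3$ completion reaching $s_{err}$, so that \emph{every} deviation is immediately punished. Provided the construction has this ``one wrong move is fatal'' shape, the forcing induction and the uniformity constraint do not interact in a problematic way, because the forced action is already constant on each indistinguishability class of $1$ and $2$; the only care needed is to ensure that the punishing agent-$3$ response is genuinely available under the protocol function $d$ at the relevant state.
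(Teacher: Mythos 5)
Your proposal hinges on the assumption, stated as a proviso at the end, that the construction has a ``one wrong move is fatal'' shape: at every state where the coalition has a genuine choice, exactly one profile is safe and any deviation can be completed by agent $3$ into a transition to $s_{err}$. That proviso fails for this iCGS, and it fails by design. First, a factual misreading: the states other than $s_{init}, s_{gen}, s_{tr}$ are not ``absorbing or without edges into $s_{err}$''; the states $s'_{init}, s_{lb}, s'_{lb}, s^1_{namb}, s^2_{namb}$ all have $*$-labelled edges to $s_{err}$, so there too only $(ok,ok)$ is safe (your existence argument, which plays an arbitrary ``safe'' profile at these states, would actually walk into $s_{err}$; it is repaired by simply playing $(ok,ok)$ everywhere, which is constant and hence uniform). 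The genuinely problematic states are $s^1_{amb}$ and $s^2_{amb}$: they have \emph{no} edge into $s_{err}$ at all --- the label $\#$ comprises every profile, including all deviations, and every such transition goes to $s^1_{namb}$, resp.\ $s^2_{namb}$. These two states are visited by the avoiding strategy's own outcomes (e.g.\ $s_{init} \xrightarrow{ok,ok} s_{gen} \xrightarrow{ok,ok} s^1_{amb}$), yet your forcing induction, which pins down actions only where a deviation is immediately punished, says nothing about what an avoiding strategy plays at histories ending there. So uniqueness does not follow from your argument; indeed, for \emph{non-uniform} strategies it is false: a strategy may play anything at histories ending in $s^1_{amb}$ or $s^2_{amb}$ and still keep all its objective outcomes away from $s_{err}$.

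The missing idea --- which is the entire content of the lemma and the reason the ambiguous states exist in the construction --- is that \emph{uniformity}, which you set aside as a non-issue, is what forces the action at the ambiguous states. The paper's argument is: a strategy-compatible history $h$ ending in $s^2_{amb}$, say $h = s_{init}\to s_{gen}\to s_{tr}\to s^2_{amb}$, is $\sim_1$-indistinguishable from $h' = s_{init}\to s_{gen}\to s^1_{amb}\to s^1_{namb}$ and $\sim_2$-indistinguishable from $h'' = s_{init}\to s_{gen}\to s_{tr}\to s_{gen}$. Both $h'$ and $h''$ end in states where your local punishment argument \emph{does} apply, forcing $\sigma_1(h') = ok$ and $\sigma_2(h'') = ok$; moreover both are themselves compatible with the strategy, because their proper prefixes are shorter and their actions are forced by (strong) induction on history length. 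Uniformity then transports each forced component back to $h$, giving $\sigma(h) = (ok,ok)$; the symmetric argument handles histories ending in $s^1_{amb}$. Your remark that ``the forced action is already constant on each indistinguishability class'' gets the logic backwards: at ambiguous-ending histories there is no locally forced action to be constant in the first place, and it is precisely the interaction between the indistinguishability relations and the forcing at non-ambiguous states that yields uniqueness.
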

\begin{proof}
	
	Note that, in all the states except $s^1_{amb}$ and $s^2_{amb}$,
	coalition $\{1,2\}$ must play $(ok,ok)$ to avoid $s_{err}$. To further
	understand why $1$ and $2$ need to play the same action in the
	remaining two states, consider history $h = s_{init}
	\xrightarrow{ok,ok} s_{gen} \xrightarrow{ok,ok} s_{tr}
	\xrightarrow{ok,ok} s^2_{amb}$.  Note that if we have $h \sim_1 h'$
	(and $h \neq h'$), then $h' = s_{init} \xrightarrow{ok,ok} s_{gen}
	\xrightarrow{ok,ok} s^1_{amb} \xrightarrow{a_1,a_2} s^1_{namb}$.  So,
	for any joint strategy $\sigma = (\sigma_1,\sigma_2)$ for $1$ and $2$,
	ensuring $s_{err} \notin succ(h',\sigma_1(h'))$ requires that
	$\sigma_1(h') = ok$, which also implies that $\sigma_1(h) = ok$ by
	$1$-uniformity of $\sigma$.  A similar argument shows that $h \sim_2
	h''$ (and $h \neq h''$) implies that $h'' = s_{init} \xrightarrow{ok,ok} s_{gen}
	\xrightarrow{ok,ok} s_{tr} \xrightarrow{ok,ok} s_{gen}$, and since the
	only way to ensure $s_{err} \notin succ(h'',\sigma_2(h''))$
	is by
	choosing $\sigma_2(h'') = ok$, we must also have $\sigma_2(h) = ok$.
	
	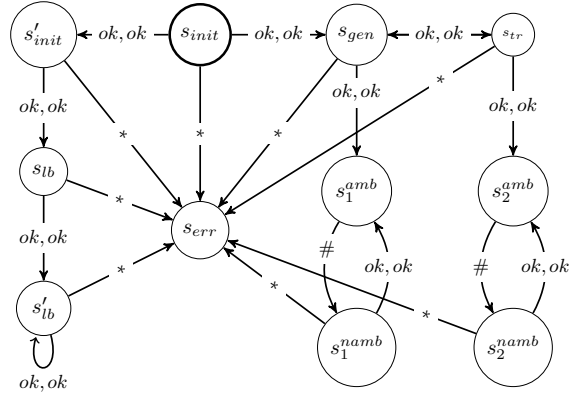
\begin{figure}
		\begin{center}
			\mbox{\scalebox{0.8}[0.8]{
					\begin{tikzpicture}
					[node distance = 4em]
					\node [cnode ]
					(Sgen)
					{\Large $\stackrel{s_{gen}}{\tiny}$};
					\node [cnode ]
					(S1amb)
					[below of = Sgen, node distance = 8em]
					{\Large $\stackrel{s^{amb}_1}{}$};
					\node [cnode, very thick]
					(Sinit)
					[left of = Sgen, node distance = 8em]
					{\Large $\stackrel{s_{init}}{}$};
					\node [cnode]
					(Sinit')
					[left of = Sinit, node distance = 8em]
					{\Large $\stackrel{s'_{init}}{}$};
					\node [cnode]
					(Slb)
					[below of = Sinit', node distance = 7em]
					{\Large $\stackrel{s_{lb}}{}$};
					\node [cnode]
					(Slb')
					[below of = Slb, node distance = 7em]
					{\Large $\stackrel{s'_{lb}}{}$};
					\node [cnode ]
					(Str)
					[right of = Sgen, node distance = 8em]
					{ $\stackrel{s_{tr}}{\tiny}$};
					\node [cnode ]
					(S2amb)
					[below of = Str, node distance = 8em]
					{\Large $\stackrel{s^{amb}_2}{}$};
					\node [cnode ]
					(S2namb)
					[below of = S2amb, node distance = 8em]
					{\Large $\stackrel{s^{namb}_2}{}$};
					\node [cnode ]
					(S1namb)
					[below of = S1amb, node distance = 8em]
					{\Large $\stackrel{s^{namb}_1}{}$};
					\node [cnode]
					(Serr)
					[below of = Sinit, node distance = 10em]
					{\Large $\stackrel{s_{err}}{}$};
					
					\path[-stealth']
					(Sgen)
					edge  [pos = 0.5]
					node [] {\footnotesize $ok,ok$}
					(Str)
					edge	[pos = 0.5]
					node [] {\footnotesize $*$}
					(Serr)
					edge  [pos = 0.3]
					node [] {\footnotesize $ok,ok$}
					(S1amb)
					
					(Sinit)
					edge	[pos = 0.5]
					node [] {\footnotesize $*$}
					(Serr)
					edge  [pos = 0.5]
					node [] {\footnotesize $ok,ok$}
					(Sgen)
					edge  [pos = 0.5]
					node [] {\footnotesize $ok,ok$}
					(Sinit')
					
					(Sinit')
					edge	[pos = 0.5]
					node [] {\footnotesize $*$}
					(Serr)
					edge  [pos = 0.5]
					node [] {\footnotesize $ok,ok$}
					(Slb)
					
					(Slb)
					edge	[pos = 0.5]
					node [] {\footnotesize $*$}
					(Serr)
					edge  [pos = 0.5]
					node [] {\footnotesize $ok,ok$}
					(Slb')
					
					(Slb')
					edge	[pos = 0.5]
					node [] {\footnotesize $*$}
					(Serr)
					edge  [loop below]
					node [noall] {\footnotesize $ok,ok$}
					()
					
					(S2amb)
					edge	[pos = 0.5, bend right]
					node [] {\footnotesize $\#$}
					(S2namb)
					
					(Str)	edge	[pos = 0.5]
					node [] {\footnotesize $ok,ok$}
					(Sgen)
					edge	[pos = 0.2
					]
					node [] {\footnotesize $*$}
					(Serr)
					edge	[pos = 0.5]
					node [] {\footnotesize $ok,ok$}
					(S2amb)
					
					(S1amb)	
					edge	[pos = 0.3, bend right]
					node [] {\footnotesize $\#$}
					(S1namb)
					
					(S1namb)
					edge	[pos = 0.5]
					node [] {\footnotesize $*$}
					(Serr)	
					edge	[pos = 0.5, bend right]
					node [] {\footnotesize $ok,ok$}
					(S1amb)
					
					(S2namb)
					edge	[pos = 0.2
					]
					node [] {\footnotesize $*$}
					(Serr)	
					edge	[pos = 0.5, bend right]
					node [] {\footnotesize $ok,ok$}
					(S2amb)
					
					
					;
					
					
					\end{tikzpicture}
				}
			}
			\caption{\label{simplemodel}The $iCGS$ $\gs$, where $* = \neg ok,ok \mid ok, \neg ok \mid \neg ok, \neg ok$ and $\# = \neg ok,ok \mid ok, \neg ok \mid \neg ok, \neg ok \mid ok, ok $. The indistinguishability relation for player $1$ has three classes: $s_{gen}$, $s_{err}$, and $s_{set}$, where $s_{set} = S \setminus \{s_{gen}, s_{err}\}$.  The indistinguishability relation for player $2$ has three classes: $s_{tr}$, $s_{err}$, and $s'_{set}$, where $s'_{set} = S \setminus \{s_{tr}, s_{err}\}$.  
				\vspace*{-10pt}}
		\end{center}
	\end{figure}
	
	By generalizing these observations, a strategy that avoids the error
	state for all outcomes can only be constructed if in every history $h$
	ending in $s_{amb}^1$ the joint action is $(ok,ok)$.  This is because:
	\begin{enumerate}
		\item if $h\sim_1 h'$ and $last(h') = s_{namb}^2$, then in $s_{namb}^2$ the only ``good'' transition for agent $1$ is $ok$, and
		\item if $h\sim_2 h''$ and $last(h'') \in \{s_{gen},s_{namb}^1\}$, then in both these states the only ``good'' transition for agent $2$ is $ok$ too.
	\end{enumerate}
	A similar remark holds for histories ending in the other ``ambiguous''
	state, $s_{amb}^2$.  So there is only one joint uniform strategy for
	$\{1,2\}$ that avoids $s_{err}$, which is choosing $(ok,ok)$ at every
	history.	
\end{proof}

We now turn to the construction of the first iCGS, which is adapted
from \cite{DimaTiplea11}.
We give the construction for a
simple deterministic $TM$ $M = \langle \mathcal{Q}, \Gamma, \delta \rangle$ with states $\mathcal{Q} = \{ q_0, q_1, q_2\}$ and tape symbols $\Gamma = \{ B, a\}$, whose transition function $\delta$ is in Table \ref{trans-tm}. 

\begin{table}[ht]\
	\vspace*{-10pt}\\
	\centerline{
		\begin{tabular}{|c||c|c|}
			\hline
			$\delta$ & $B$ & $a$
			\\ \hline\hline
			$q_0$  & $(q_1,a,R)$ & $(q_1,B,R)$
			\\ \hline
			$q_1$  & $(q_2,a,R)$ & $(q_0,B,L)$
			\\ \hline
			$q_2$  & $(q_0,B,L)$ & $(q_2,a,R)$
			\\ \hline
		\end{tabular}
	}
	\caption{Transitions of the $TM$ $M$.\vspace*{-20pt}}
	\label{trans-tm}
\end{table}

The purpose of this construction is
that the given TM never halts on an empty tape if and only if 
coalition $\{1,2\}$ has a strategy which simulates the run of the TM
on the \emph{levels} of the tree of runs compatible with the strategy. 
The simulation of the Turing machine, depicted in
Fig.~\ref{simtm}, satisfies the following properties:

\begin{figure*}
	\begin{center}
		\mbox{\scalebox{0.59}[0.59]{
				\begin{tikzpicture}
				[node distance = 4em]
				\node [cnode,  ]
				(Sgen)
				{\Large $\stackrel{s_{gen}}{\tiny}$};
				\node [cnode,  ]
				(S1amb)
				[below of = Sgen, node distance = 6em]
				{\Large $\stackrel{s^0_B}{}$};
				\node [cnode,  ]
				(Sq0b)
				[below left of = S1amb, node distance = 10em]
				{\Large $\stackrel{s_{q_0B}}{}$};
				\node [cnode,  ]
				(Sq1b)
				[below of = S1amb, node distance = 10em]
				{\Large $\stackrel{s_{q_1B}}{}$};
				\node [cnode,  ]
				(Sq2b)
				[below right of = S1amb, node distance = 10em]
				{\Large $\stackrel{s_{q_2B}}{}$};
				\node [cnode, very thick]
				(Sinit)
				[left of = Sgen, node distance = 10em]
				{\Large $\stackrel{s_{init}}{}$};
				\node [cnode]
				(Sinit')
				[left of = Sinit, node distance = 10em]
				{\Large $\stackrel{s'_{init}}{}$};
				\node [cnode]
				(Slb)
				[below of = Sinit', node distance = 10em]
				{\Large $\stackrel{s_{lb}}{}$};
				\node [cnode]
				(Slb')
				[below of = Slb, node distance = 10em]
				{\Large $\stackrel{s'_{lb}}{}$};
				\node [cnode,  ]
				(Str)
				[right of = Sgen, node distance = 30em]
				{ $\stackrel{s_{tr}}{\tiny}$};
				\node [cnode,  ]
				(S2amb)
				[below of = Str, node distance = 10em]
				{\Large $\stackrel{s^{1}_{tr}}{}$};
				\node [cnode,  ]
				(Sq0q1r)
				[above left of = S2amb, node distance = 10em]
				{\Large $\stackrel{s_{q_0q_1R}}{}$};
				\node [cnode,  ]
				(Sq2q0l)
				[left of = S2amb, node distance = 10em]
				{\Large $\stackrel{s_{q_2q_0L}}{}$};
				\node [cnode,  ]
				(Sq2q2r)
				[below left of = S2amb, node distance = 10em]
				{\Large $\stackrel{s_{q_2q_2R}}{}$};
				\node [cnode,  ]
				(S2namb)
				[above right of = S2amb, node distance = 10em]
				{\Large $\stackrel{s^{0}_{tr}}{}$};
				\node [cnode,  ]
				(Sq1q0l)
				[right of = S2amb, node distance = 10em]
				{\Large $\stackrel{s_{q_1q_0L}}{}$};
				\node [cnode,  ]
				(Sq1q2r)
				[below right of = S2amb, node distance = 10em]
				{\Large $\stackrel{s_{q_1q_2R}}{}$};
				\node [cnode,  ]
				(S1namb)
				[below of = Sinit, node distance = 4em]
				{\Large $\stackrel{s^{1}_B}{}$};
				\node [cnode,  ]
				(Sa0)
				[below of = Sq1b, node distance = 5em]
				{\Large $\stackrel{s^0_a}{}$};
				\node [cnode,  ]
				(Sa1)
				[right of = Sa0, node distance = 10em]
				{\Large $\stackrel{s^{1}_a}{}$};
				\node [cnode,  ]
				(Sq0a)
				[below left of = Sa0, node distance = 10em]
				{\Large $\stackrel{s_{q_0a}}{}$};
				\node [cnode,  ]
				(Sq1a)
				[left of = Sa0, node distance = 9em]
				{\Large $\stackrel{s_{q_1a}}{}$};
				\node [cnode,  ]
				(Sq2a)
				[below right of = Sa0, node distance = 10em]
				{\Large $\stackrel{s_{q_2a}}{}$};
				\node [cnode,  ]
				(Sw1)
				[right of = S1amb, node distance = 7em]
				{\Large $\stackrel{s_{\bot}^1}{}$};
				\node [cnode,  ]
				(Sw2)
				[right of = Sw1, node distance = 8em]
				{\Large $\stackrel{s_{\bot}^2}{}$};
				\node [cnode,  ]
				(Sw3)
				[below of = S2amb, node distance = 10em]
				{\Large $\stackrel{s_{\bot}^3}{}$};
				\node [cnode,  ]
				(Sw4)
				[below of = Sw3, node distance = 7em]
				{\Large $\stackrel{s_{\bot}^4}{}$};
				
				\path[-stealth']
				(Sw1)
				edge  [pos = 0.5, bend left]
				node [] {\footnotesize $i,i$}
				(Sw2)
				
				(Sw2)
				edge  [pos = 0.5, bend left]
				node [] {\footnotesize $any$}
				(Sw1)
				
				(Sw3)
				edge  [pos = 0.5, bend left]
				node [] {\footnotesize $i,i$}
				(Sw4)
				
				(Sw4)
				edge  [pos = 0.5, bend left]
				node [] {\footnotesize $any$}
				(Sw3)
				
				(Sgen)
				edge  [pos = 0.5]
				node [] {\footnotesize $i,i$}
				(Str)
				edge  [pos = 0.5]
				node [] {\footnotesize $i,i$}
				(S1amb)
				
				(Sinit)
				edge  [pos = 0.5]
				node [] {\footnotesize $i,i$}
				(Sgen)
				edge  [pos = 0.5]
				node [] {\footnotesize $i,i$}
				(Sinit')
				
				(Sinit')
				edge  [pos = 0.5]
				node [] {\footnotesize $i,i$}
				(Slb)
				
				(Slb)
				edge  [pos = 0.5]
				node [] {\footnotesize $i,q_0$}
				(Slb')
				
				(Slb')
				edge  [loop below]
				node [noall] {\footnotesize $i,i$}
				()
				
				(S2amb)
				edge	[pos = 0.5]
				node [] {\footnotesize $i,i$}
				(S2namb)
				edge	[pos = 0.5]
				node [] {\footnotesize $(q_0q_1R),i$}
				(Sq0q1r)
				edge	[pos = 0.5, bend right]
				node [] {\footnotesize $i,(q_2q_0L)$}
				(Sq2q0l)
				edge	[pos = 0.5]
				node [] {\footnotesize $(q_2q_2R),i$}
				(Sq2q2r)
				edge	[pos = 0.5]
				node [] {\footnotesize $(q_1q_2R),i$}
				(Sq1q2r)
				edge	[pos = 0.5, bend left]
				node [] {\footnotesize $i,(q_1q_0L)$}
				(Sq1q0l)
				edge	[pos = 0.8]
				node [] {\footnotesize $+$}
				(Sw3)
				
				(Sq0q1r)
				edge	[pos = 0.5, bend left]
				node [] {\footnotesize $i, (q_0q_1R)$}
				(S2amb)
				
				(Sq2q0l)
				edge	[pos = 0.5]
				node [] {\footnotesize $(q_2q_0L),i$}
				(S2amb)
				
				(Sq2q2r)					
				edge	[pos = 0.5, bend right]
				node [] {\footnotesize $i, (q_2q_2R)$}
				(S2amb)					
				
				(Sq1q2r)
				edge	[pos = 0.5, bend left]
				node [] {\footnotesize $i, (q_1q_2R)$}
				(S2amb)	
				
				(Sq1q0l)			
				edge	[pos = 0.5]
				node [] {\footnotesize $(q_1q_0L), i$}
				(S2amb)					
				
				(Str)	edge	[pos = 0.5]
				node [] {\footnotesize $i,i$}
				(Sgen)
				edge	[pos = 0.3]
				node [] {\footnotesize $i,i$}
				(S2amb)
				
				(S1amb)	
				edge	[pos = 0.5, bend right = 20]
				node [] {\footnotesize $i,i$}
				(S1namb)
				edge	[pos = 0.5]
				node [] {\footnotesize $*$}
				(Sq0b)
				edge	[pos = 0.5]
				node [] {\footnotesize $i,(q_0q_1R)$}
				(Sq1b)
				edge	[pos = 0.5]
				node [] {\footnotesize $\#$}
				(Sq2b)
				edge	[pos = 0.5]
				node [] {\footnotesize $+$}
				(Sw1)
				
				(Sq1b)
				edge	[pos = 0.5]
				node [] {\footnotesize $(q_1q_2R),i$}
				(Sa0)
				
				(Sq0b)
				edge	[pos = 0.5]
				node [] {\footnotesize $(q_0q_1R),i$}
				(Sa0)	
				
				(Sq2b)
				edge	[pos = 0.5, bend right]
				node [] {\footnotesize $i, (q_2q_0L)$}
				(S1amb)					
				
				(S1namb)
				edge	[pos = 0.5, bend left = 20]
				node [] {\footnotesize $i,i$}
				(S1amb)	
				
				(Sa0)	
				edge	[pos = 0.5]
				node [] {\footnotesize $i,i$}
				(Sa1)
				edge	[pos = 0.6]
				node [] {\footnotesize $\stackrel{(q_1,q_0,L),i}{(q_2,q_0,L),i}$}
				(Sq0a)
				edge	[pos = 0.45]
				node [] {\footnotesize $i,(q_0,q_1,R)$}
				(Sq1a)
				edge	[pos = 0.5, bend left]
				node [] {\footnotesize $\stackrel{i,(q_1q_2R)}{i,(q_2q_2R)}$}
				(Sq2a)
				edge	[pos = 0.5, bend right = 65]
				node [] {\footnotesize $+$}
				(Sw1)
				
				(Sq2a)
				edge	[pos = 0.3, bend left]
				node [] {\footnotesize $i, (q_2,q_2,R)$}
				(Sa0)
				
				(Sq0a)
				edge	[pos = 0.2, bend left = 90]
				node [] {\footnotesize $(q_0,q_1,R),i$}
				(S1amb)	
				
				(Sq1a)
				edge	[pos = 0.1, bend left = 60]
				node [] {\footnotesize $i, (q_1,q_0,L)$}
				(S1amb)						
				
				(Sa1)
				edge	[pos = 0.5]
				node [] {\footnotesize $i,i$}
				(Sa0)
				
				(S2namb)
				edge	[pos = 0.5]
				node [] {\footnotesize $i,i$}
				(S2amb)
				;
				\end{tikzpicture}
			}
		}
		\caption{\label{tm-model}The $iCGS$ $\gtm$, where $* = i, q_0 \mid (q_1,q_0,L), i \mid (q_2,q_0,L), i$, $\# = i, (q_1,q_2,R) \mid i, (q_2,q_2,R)$, $+$ represents all the possible combination of actions less the tuples already displayed, and $any$ represents all the possible combinations of actions. Note that, all the missing transitions go to the error state. The indistinguishability relation for player $1$ has three classes: $s_{gen}$, $s_{err}$, and $s_{set}$, where $s_{set} = S \setminus \{s_{gen}, s_{err}\}$.  The indistinguishability relation for player $2$ has three classes: $s_{tr}$, $s_{err}$, and $s'_{set}$, where $s'_{set} = S \setminus \{s_{tr}, s_{err}\}$.
			\vspace*{-15pt}}
	\end{center}
\end{figure*}

\begin{enumerate}
	\item Every run $\rho$ starting with $s_{init}\big(\xrightarrow{i,i}
	s_{gen} \xrightarrow{i,i} s_{tr}\big)^n \xrightarrow{i,i} s_{gen}
	\xrightarrow{i,i} s_B$, simulates the evolution of the contents of
	the $n$-th cell on the tape. We call such runs as {\em $(1,n)$-runs}
	and denote them $\rho^{(1,n)}$.  Formally, for each $k\geq n$,
	depending on the $k$-th configuration of the TM:
	\begin{enumerate}
		\item If the R/W head points to cell $n$ that holds symbol $x$,
		the TM is in state $q$, and the transition table gives
		$\delta(q,x) \!=\! (r,y,R)$ (i.e. an R-move of the head), then \\
		$\rho^{(1,n)}[2k+2,2k+4] = s_{q,x}\xrightarrow{(q,r,R),i} s_y^0 \xrightarrow{i,i} s_y^1$\\
		$\rho^{(1,n+1)}[2k+2,2k+4] = s_z^1\xrightarrow{i,i} s_z^0 \xrightarrow{i,(q,r,R)} s_{r,z}$, for some $z \!\in\! \Gamma$
		representing the contents of tape cell $(n\!+\!1)$ in configuration $k$.
		
		\item On the other hand,
		if the transition table gives $\delta(q,x) \\ \!=\! (r,y,L)$ (i.e. an L-move of the head), 
		then \\
		$\rho^{(1,n)}[2k+2,2k+4]  = s_{q,x}\xrightarrow{i,(q,r,L)} s_y^0 \xrightarrow{i,i} s_y^1$\\
		$\rho^{(1,n-1)}[2k+2,2k+4] = s_z^1\xrightarrow{i,i} s_z^0 \xrightarrow{(q,r,L),i} s_{r,z}$, for some $z \in \Gamma$.
		
		\item Otherwise (i.e., the R/W head is not pointing cells $n-1$ to
		$n+1$), $\rho^{(1,n)}[2k,2k+2] = s_z^1 \xrightarrow{i,i} s_z^0
		\xrightarrow{i,i} s_z^1$ where $z$ is the contents of cell $n$
		in configuration $k$.
	\end{enumerate}
	Note that two steps are needed along each run to encode the transition of the R/W head from 
	cell $n$ to cell $(n+1)$ for an R-move, or to cell $(n-1)$ for a L-move.
	
	\item Every run $\rho$ starting with $s_{init}\big(\xrightarrow{i,i}
	s_{gen} \xrightarrow{i,i} s_{tr}\big)^n \xrightarrow{i,i} s^1_{tr}$,
	simulates
	a move of the R/W head between the $(n-1)$-th and
	the $n$-th cell, (which we call in the sequel the \emph{$n$-th
		frontier}), for $n\geq 1$.  We call such runs as {\em $(2,n)$-runs} and
	denote them $\rho^{(2,n)}$.  Formally, for every $k\geq n$, depending
	on the transition between the $k$-th and the $(k+1)$-th
	configuration of the TM:
	\begin{enumerate}
		\item If the R/W head moves from the $n$-th cell to the $(n-1)$-th
		by executing $\delta(q,x) = (r,y,L)$, then\\
		$\rho^{(2,n)}[2k\!+\!3,2k\!+\!5] = s_{tr}^1
		\xrightarrow{i,(q,r,L)} s_{q,r,L} \xrightarrow{(q,r,L),i}
		s_{tr}^1$.
		\item If the R/W head moves from the $(n-1)$-th cell to the $n$-th by executing transition $\delta(q,x) = (r,y,R)$, then 
		$\rho^{(2,n)}[2k\!+\!3,2k\!+\!5] = s_{tr}^1 \xrightarrow{(q,r,R),i} s_{q,r,R} \xrightarrow{i,(q,r,R)} s_{tr}^1$.
		\item Otherwise, $\rho^{(2,n)}[2k\!+\!3,2k\!+\!5] = s_{tr}^1 \xrightarrow {i,i} s_{tr}^0\xrightarrow {i,i} s_{tr}^1$.
	\end{enumerate}
\end{enumerate}

\noindent
Finally, run $\rho_0 = s_{init} \xrightarrow{i,i} s_{init}'
\xrightarrow{i,i} s_{lb}\xrightarrow{i,(q_0)} s_{lb}'\big(
\xrightarrow{i,i} s_{lb}'\big)^\omega$ simulates the ``left bound'' of
the tape and the start of the TM with the
R/W head on the initial state $q_0$.
Let $GR$ denote the set
$\{\rho_0 \} \cup \{\rho^{(1,n)},\rho^{(2,n)}\mid n\in \mathbb{N}\}$
of ``good'' runs , and $\sigma^{win}$ the unique strategy for agents
$1$ and $2$ that simulates the infinite run of the TM, when it exists.

In what follows, we group states of $\gtm$ in five sets:
\begin{enumerate}
	\item $S^0_{namb} = \{s_{init},s_{init}',s_{lb},s_{lb}'\}$.
	\item $S^1_{namb} = \{s_{q,z}\mid  q \in \mathcal{Q}  \text{ and } z \in \Gamma\}\cup \{s_\bot^1, s_B^1, s_a^1\}$.
	\item $S^1_{amb} = \{s_a^0,s_B^0,s_\bot^2\}$.
	\item $S^2_{namb} = \{s_{tr}^1,s_\bot^3\}$.
	\item $S^2_{amb} = \big\{s_{q,r,x}\mid q,r \!\in\! Q, x \!\in \!\{R,L\}\big\}\!\cup\! \{s_\bot^4, s_{tr}^0\}$.
\end{enumerate}
We also call states in $S_{amb} = S^1_{amb}\cup S^2_{amb}$ as
``ambiguous'' and in $S_{namb} = S \setminus S_{amb}$ ``nonambiguous''. 
Note that, in each nonambiguous state, all outgoing transitions which avoir $s_{err}$ are labeled with a unique tuple of actions,
while  all ambiguous states do not have transitions leading to $s_{err}$.

Before defining the actual bisimulation between $\gtm$ and
$\gs$, note that runs that ``deviate'' from runs in $GR$ (and
therefore associated with a ``wrong'' strategy that cannot avoid
$s_{err}$) contain either a transition from a nonambiguous state to $s_{err}$,
or a transition from an ambiguous state which is not consistent with the TM computation, as explain below.

The line of reasoning guaranteeing that every strategy $\sigma$
consistent with a run which ``deviates'' from a run in $GR$ will not
be able to avoid $s_{err}$ is similar to the proof of
Lemma~\ref{remark-deviations-g-simple} above.  Assume that there
exists a single partial strategy, defined on $Hist^{\leq i}(\gtm)$,
which avoids $s_{err}$ and all histories compatible with this strategy
are prefixes of length $\leq n$ of runs in $GR$.  Note first that each
such history ending in states in $S_{namb}$ can only be completed with
a transition that simulates correctly the unique run up to level
$i\!+\!1$.

\begin{figure}[t]
	\mbox{\scalebox{0.67}[0.67]{
			\begin{tikzpicture}
			[level distance=2cm,
			level 1/.style={sibling distance=5.5cm},
			level 2/.style={sibling distance=5.5cm},
			level 3/.style={sibling distance=5cm},
			level 4/.style={sibling distance=4.5cm},
			level 5/.style={sibling distance=3cm},
			every text node part/.style={font=\small},
			every lower node part/.style={font=\small}]
			\node[ellipse,draw] {$s_{init}$}
			child {
				node[circle,draw] {$s_{init}'$}
				child {
					node[circle,draw] {$s_{lb}$}
					child {
						node[circle,draw] {$s_{lb}'$}
						child {
							node[circle,draw] {$s_{lb}'$}
							child {
								node[circle,draw] {$s_{lb}'$}
								child {
									node[circle,draw] {$s_{lb}'$}
									child {
										node[circle,draw] {$s_{lb}'$}
										child {
											node[circle,draw] {$s_{lb}'$}
											child {
												node[circle,draw] {$s_{lb}'$}
												edge from parent node[midway,left] {$(i,i)$}
											}
											edge from parent node[midway,left] {$(i,i)$}
										}
										edge from parent node[midway,left] {$(i,i)$}
									}
									edge from parent node[midway,left] {$(i,i)$}
								}
								edge from parent node[midway,left] {$(i,i)$}
							}
							edge from parent node[midway,left] {$(i,i)$}
						} 
						edge from parent node[midway,left] {$(i,(q_0))$}
					}
					edge from parent node[midway,left] {$(i,i)$}
				} 	
				edge from parent node[midway,left] {$(i,i)$}
			}
			child {
				node[ellipse,draw] {$s_{gen}$} 
				child {
					node[circle,draw] {$s^0_B$}
					child {
						node[circle,draw] {$s_{q_0,B}$}
						child {
							node[circle,draw] {$s^0_a$}
							child {
								node[circle,draw] {$s^1_a$}
								child {
									node[circle,draw] {$s^0_a$}
									child {
										node[circle,draw] {$s^1_{a}$}
										child {
											node[circle,draw] {$s^0_{a}$}
											child {
												node[circle,draw] {$s^1_{a}$}
												edge from parent node[midway,left] {$(i,i)$}
											}
											edge from parent node[midway,left] {$(i,i)$}
										}
										edge from parent node[midway,left] {$(i,i)$}
									}
									edge from parent node[midway,left] {$(i,i)$}
								}
								edge from parent node[midway,left] {$(i,i)$}
							}
							edge from parent node[midway,left] {$((q_0,q_1,R),i)$}
						}
						edge from parent node[midway,left] {$(i,(q_0))$}
					}
					edge from parent node[midway,left] {$(i,i)$}
				} 
				child {
					node[ellipse,draw] {$s_{tr}$}
					child {
						node[circle,draw] {$s^1_{tr}$}
						child {
							node[circle,draw] {$s_{q_0,q_1,R}$}
							child {
								node[circle,draw] {$s^1_{tr}$}
								child {
									node[circle,draw] {$s^0_{tr}$}
									child {
										node[circle,draw] {$s^1_{tr}$}
										child {
											node[circle,draw] {$s^0_{tr}$}
											child {
												node[circle,draw] {$s^1_{tr}$}
												edge from parent node[midway,left] {$(i,i)$}
											}
											edge from parent node[midway,left] {$(i,i)$}
										}
										edge from parent node[midway,left] {$(i,i)$}
									}
									edge from parent node[midway,left] {$(i,i)$}
								}
								edge from parent node[midway,left] {$(i,(q_0,q_1,R))$}
							}
							edge from parent node[midway,left] {$((q_0,q_1,R),i)$}
						}
						edge from parent node[midway,left] {$(i,i)$}
					}
					child { 
						node[ellipse,draw] {$s_{gen}$}
						child {
							node[circle,draw] {$s^0_B$}
							child {
								node[circle,draw] {$s_{q_1,B}$}
								child {
									node[circle,draw] {$s^0_a$}
									child {
										node[circle,draw] {$s^1_a$}
										child {
											node[circle,draw] {$s^0_a$}
											child {
												node[circle,draw] {$s_{q_0,a}$}
												edge from parent node[midway,left] {$((q_2,q_0,L),i)$}
											}
											edge from parent node[midway,left] {$(i,i)$}
										}
										edge from parent node[midway,left] {$(i,i)$}
									}
									edge from parent node[midway,left] {$((q_1,q_2,R),i)$}
								}
								edge from parent node[midway,left] {$(i,(q_0,q_1,R))$}
							}
							edge from parent node[midway,left] {$(i,i)$}
						}
						child {
							node[ellipse,draw] {$s_{tr}$}
							child {
								node[ellipse,draw] {$s^1_{tr}$}
								child {
									node[ellipse,draw] {$s_{q_1,q_2,R}$}
									child {
										node[ellipse,draw] {$s^1_{tr}$}
										child {
											node[ellipse,draw] {$s_{q_2,q_0,L}$}
											child {
												node[ellipse,draw] {$s^1_{tr}$}
												edge from parent node[midway,left] {$((q_2,q_0,L),i)$}
											}
											edge from parent node[midway,left] {$(i,(q_2,q_0,L))$}
										}
										edge from parent node[midway,left] {$(i,(q_1,q_2,R))$}
									}
									edge from parent node[midway,left] {$((q_1,q_2,R),i)$}
								}
								edge from parent node[midway,left] {$(i,i)$}
							}
							child {
								node[ellipse,draw] {$s_{gen}$}
								child {
									node[ellipse,draw] {$s^0_{B}$}
									child {
										node[ellipse,draw] {$s_{q_2,B}$}
										child {
											node[ellipse,draw] {$s^0_{B}$}
											child {
												node[ellipse,draw] {$s^1_{B}$}
												edge from parent node[midway,left] {$(i,i)$}
											}
											edge from parent node[midway,left] {$(i,(q_2,q_0,L))$}
										}
										edge from parent node[midway,left] {$(i,(q_1,q_2,R))$}
									}
									edge from parent node[midway,left] {$(i,i)$}
								}
								edge from parent node[midway,right] {$(i,i)$}
							}
							edge from parent node[midway,right] {$(i,i)$}
						}
						edge from parent node[midway,right] {$(i,i)$}
					}
					edge from parent node[midway,right] {$(i,i)$}
				}     
				edge from parent node[midway,right] {$(i,i)$}
			}
			;
			\end{tikzpicture}
	}}
	\caption{\label{simtm} Simulating three computation steps of the Turing machine in Table \ref{trans-tm}.\vspace*{-10pt}}
\end{figure}

To see what happens with the other type of histories, consider some
history $h = \rho^{(1,n)}[\!\leq \!i] \xrightarrow{a_1,a_2} s$ with $h[i]
\not \in S_{namb}$.  Note then that if $h \sim_1 h'$ then $h'[\!\leq\! i]
\preceq \rho^{(2,n)}$ 
and $h'[i]\in S_{namb}$ is a nonambiguous state.
Therefore, there exists a unique $h'[i]\xrightarrow{a_1,b_2} s'$, and,
moreover, this transition has to correctly simulate the $n$-th frontier
at level $i+1$.  It then follows that $a_1$ is the ``good'' decision
agent $1$ has to make to correctly simulate the TM run on $h$.  A
similar argument holds for $h'' \sim_2 h$ by noting that $h'' [\!\leq\! i]
\preceq \rho^{(2,n-1)}$.  Also similar arguments hold if we start with
$h \preceq \rho^{(2,n)}$.

Finally, the bisimulation relation between $\gtm$ and $\gs$
is guided by the intuition that histories ending 
in nonambiguous states $S^1_{namb}$ resp. $S^2_{namb}$, "behave similarly" with histories ending in $s_{namb}^1$, resp. $s_{namb}^2$,
while histories ending in ambiguous states $S^1_{amb}$, resp. $S^2_{namb}$ behave similarly with histories ending in $s_{amb}^1$, resp. $s_{amb}^2$.

Specifically, for every $h\in Hist(\gtm)$ with $h\prec \rho^{(1,n)}$
and $h\not \prec\rho^{(2,n)}$, we set $h \altbisim{1,2} \chi$ for
$\chi$ defined as follows:
\begin{enumerate}
	\item $\chi[i]  = h[i] $ if $h[i] \in S^0_{namb} \cup \{s_{err}\}$.
	\item $\chi[i] = s_{amb}^1$ if $h[i] \in S^1_{amb}$.
	\item $\chi[i] = s_{namb}^1$ if $h[i] \in S^1_{namb}$.
	\item For $y \in \{1,2\}$, $act_y(\chi,i) = ok$ if and only if $act_y(h,i)$ is the ``correct'' action executed by agent $y$  
	for simulating the contents of the $n$-th cell at level $i$ along $\rho^{(1,n)}$.
\end{enumerate}

Similarly, for every $h\in Hist(\gtm)$ with $h\prec \rho^{(2,n)}$ and $h\not \prec\rho^{(1,n+1)}$,
we set $h \altbisim{1,2} \chi$ for $\chi$ defined as follows:
\begin{enumerate}
	\item $\chi[i] = s_{amb}^2$ if $h[i] \in S^2_{amb}$.
	\item $\chi[i] \!=\! s_{namb}^2$ if $h[i] \!\in\! S^2_{namb}$.
	\item For $y \in \{1,2\}$, $act_y(\chi,i) = ok$ if and only if $act_y(h,i)$ is the ``correct'' action executed by agent $y$
	for simulating the $n$-th frontier at level $i$ along $\rho^{(2,n)}$.
\end{enumerate}
Note that $\altbisim{1,2}$ is in fact functional.

The strategy simulator $ST$ can be constructed using the functional relation $\altbisim{1,2}$ as follows:
for any joint strategy $\sigma$ 
and any history $h \in Hist(\gtm)$ compatible with $\sigma$,
take the unique $\chi_h\in Hist(\gs)$ with $h \altbisim{1,2}\chi_h$ and define the partial strategy $\tau_\sigma$ 
with $\tau^\sigma(\chi_h[\!<\!|\chi_h|]) = act(\chi_h,|\!<\!\chi_h|)$.
Then note that, whenever we have two different joint strategies $\sigma^1$ and $\sigma^2$ 
which share some compatible histories, then for any $h$ compatible with both we have 
that $\tau^{\sigma^1}(\chi_h[\!<\!|\chi_h|]) = \tau^{\sigma_2}(\chi_h[|\!<\!\chi_h|])$.
This means that the following definition correctly constructs a strategy simulator:
for each $h \in Hist(\gtm)$, 
each partial strategy $\br \sigma \in PStr(C_A(h))$, $h' \in C_A(h)$ and for each joint strategy  $\sigma$ with $\sigma\restr{C_A(h)} = \br \sigma$,
\begin{center}
	$ST(\br\sigma)(h) = \tau^\sigma(\chi_h[<|\chi_h|])$
\end{center}
because, as noted above, different $\tau^\sigma$ agree on the same $\chi_h$, so the choice of 
$\sigma$ is not important as long as it is compatible with $h$.

The inverse strategy simulator can be chosen as any inverse function of $ST$, i.e. any function $ST'$ with $ST\circ ST'$ being the identity function.

Hence $\altbisim{1,2}$ is indeed an $\{1,2\}$-bisimulation with $ST$ and $ST'$ strategy simulators, 
if and only if $M$ never stops when starting with an empty tape, 
which ends the proof of the undecidability theorem.

\section{Conclusions and Future Work} \label{conc}

In this paper we advanced the state of the art in the model theory of
logics for strategic reasoning in multi-agent systems.  Specifically,
in Sec.~\ref{preliminaries} we considered the {\em common knowledge}
interpretation of the Alternating-time Temporal Logic ATL under the
assumption of imperfect information (and perfect recall), which has so
far received little attention in the literature.  For this context of
imperfect information, we introduced a novel notion of alternating
bisimulation in Sec.~\ref{altbisim} and were able to prove the
preservation of ATL formulas in bisimilar iCGS
(Theorem~\ref{thm-bisim-mem}).  Further, in order to show that the
common knowledge interpretation enjoys the Hennessy-Milner property,
in Sec.~\ref{bisimgames} we introduced an imperfect information
variant bisimulation games and showed
that the \duplic{} coalition has a winning strategy if and only if
there exists a bisimulation between the two given iCGS
(Theorem~\ref{game}). Finally, in Sec.~\ref{HMproperty} we proved the
Gale-Stewart determinacy Theorem~\ref{GS-theorem}, which allows us to
prove the
Hennessy-Milner Theorem~\ref{HM-theorem}.  We also provided
counterexamples to the Hennessy-Milner property for the objective and
subjective interpretation of ATL.
To conclude, in Sec.~\ref{bisim_undec}
we showed that
checking the existence of an alternating bisimulation between two iCGS is undecidable in general (Theorem~\ref{undec}).

We note that our Hennessy-Milner theorem utilizes the "yesterday" modality for 
technical reasons. As noted in the proof of Theorem \ref{HM-theorem},
Formula \ref{formula-xatl} might be encoded with an ATL$^*$ formula which does not utilize $\textbf{Y}$. 
The translation of this theorem to the full ATL$^*$ is left for future research.

As another direction for future research, we plan to investigate 
under which conditions our Gale-Stewart-type theorem can be generalized to a full determinacy 
theorem for multi-agent games.

\newcommand{\hoek}[1]{}


\end{document}